\definecolor{lightblue}{rgb}{0.8,0.9,1} 
\definecolor{lightred}{rgb}{1,0.5,0.4} 
\newcommand{\hal}[2]{#1} 
\newtheorem{Def}{Definition}
\newtheorem{Thm}{Theorem}
\newtheorem{Lemme}{Lemme}
\newtheorem{remark}{Remark}
\newtheorem{Proof}{Proof}
\begin{document}


\title{Lower Bound for (Sum) Coloring Problem}

\author{Alexandre Gondran\\ENAC, French Civil Aviation University, Toulouse, France\\\texttt{alexandre.gondran@enac.fr}
\and
Vincent Duchamp\\ENAC, French Civil Aviation University, Toulouse, France\\\texttt{vincent.duchamp@alumni.enac.fr}
\and
Laurent Moalic\\UHA,  University of Upper Alsace, Mulhouse, France\\\texttt{laurent.moalic@uha.fr}
}
%

%
%
\date{}
\maketitle              



\begin{abstract}
The Minimum Sum Coloring Problem is a variant of the Graph Vertex Coloring Problem, for which each color has a weight. 
This paper presents a new way to find a lower bound of this problem, based on a relaxation into an integer partition problem with additional constraints. 
We improve the lower bound for 18 graphs of standard benchmark DIMACS, and prove the optimal value for 4 graphs by reaching their known upper bound.

\end{abstract}






\section{Introduction}\label{intro}

The Minimum Sum Coloring Problem $MSCP$ is a variant of the Graph Vertex Coloring Problem (GVCP), with weights associated to colors. 
This problem can be applied to various domains such as scheduling, resource allocation or VLSI design~\cite{Malafiejski2004,Kubale2004}.

Given an undirected graph $G=(V,E)$ with $V$ a set of $n$ vertices and $E\subset V^2$ a set of edges, graph vertex coloring involves assigning each vertex with a color so that two adjacent vertices (linked by an edge) feature different colors. 
An equivalent formulation is to consider a coloring as a partition of $G$ into subsets of vertices so that two adjacent vertices not belong to the same subset\cite{Wu2018}. 

The GVCP consists in finding the minimum number of colors (or equivalently the minimum number of subsets), called \emph{chromatic number $\chi(G)$}, 
required to color (or equivalently to partition) the graph $G$.


The MSCP is a variant of GVCP, in which each color has a cost equals to the integer that represents the color.
The objective of MSCP is to minimize the sum of the cost of the coloring, called \emph{chromatic sum} of $G$ and denoted $\Sigma(G)$. 
Figure~\ref{fig:MSCP} gives an example of MSCP from Jin and Hao~\cite{Jin2016} on a graph with $n=9$ vertices and shows the difference between the two problems.
\begin{figure}[H]
\centering
\includegraphics[width=0.7\linewidth]{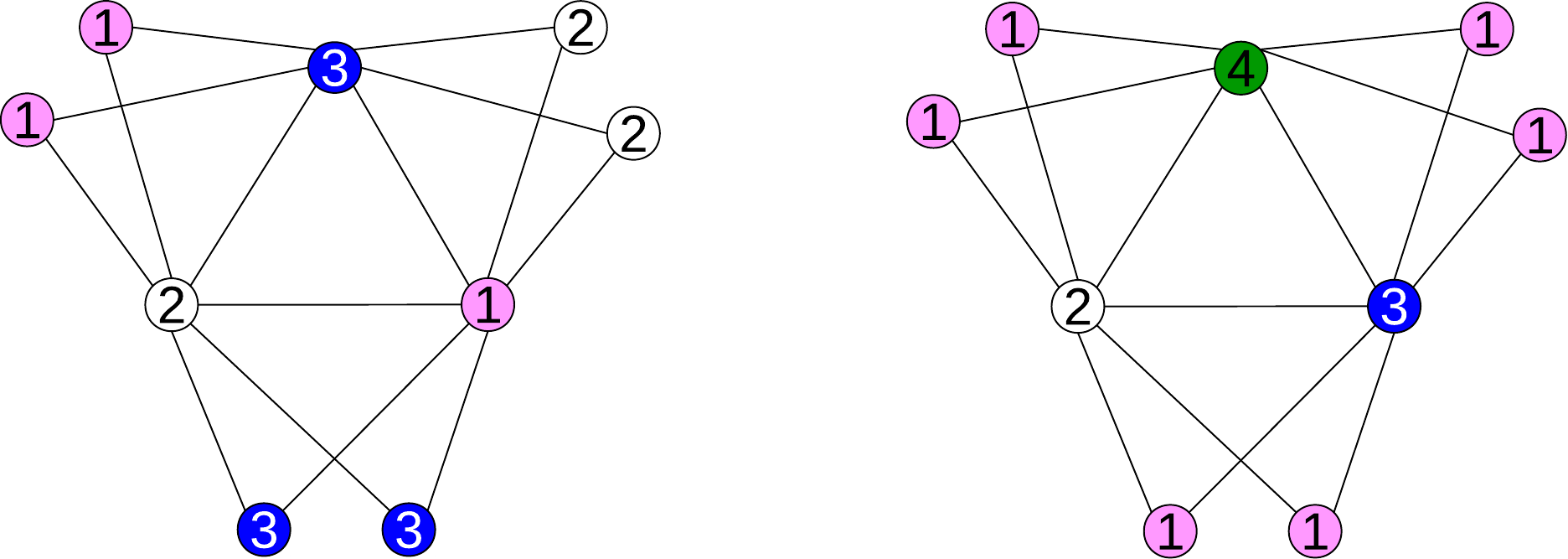}
\caption{Example of graph for MSCP from Jin and Hao~\cite{Jin2016}. The left coloring uses 3 colors (integers 1, 2 and 3).
It is an optimal solution of GVCP and the chromatic number $\chi(G)=3$. Moreover, its sum coloring cost is equal to 18. 
The right coloring uses one more color (integers 1, 2, 3 and 4) but its sum coloring cost is equal to 15, which is the chromatic sum of $G$: $\Sigma(G)=15$.}
\label{fig:MSCP}
\end{figure}

More precisely, one possible formulation of the MSCP is the following~:

$$(MSCP)\left\{
\begin{minipage}[c]{0.8\linewidth}\vspace{-5mm}
\begin{eqnarray}
\displaystyle \text{Min.}   &   \displaystyle f_\Sigma(\mathbf{V})= \sum \limits_{l=1}^n l|V_l|\label{eq:obj}\\[-0.5mm]
\text{s.c.} &\displaystyle \cup_{l=1..n} V_l = V\label{eq:partition1}\\[-0.5mm]
            &\displaystyle V_l \cap V_k = \emptyset,&\forall (l,k), l\neq k\label{eq:partition2}\\[-0.5mm]
            &\displaystyle |V_l| \geq |V_{l+1}|, &\forall l=1..n-1\label{eq:decrease}\\[-0.5mm]
            &\displaystyle \{i,j\} \nsubseteq V_l,&\forall (i,j) \in E,\ \forall l=1..n\label{eq:consColoring}\\[-0.5mm]
            & V_l \subset V&\forall l=1..n\label{eq:domain}%
\end{eqnarray}
\end{minipage}
\right. \nonumber
$$

We denoted $\mathbf{V}=(V_1, V_2..., V_n)$ the partition of $G$ (constraints (2) and (3)) into $n$ sets with $n$, the number of vertices of $G$. 
For all $1\leq i\leq n$, each $V_i$ is called color class.
To be as general as possible, we do not precise the number $k$ of colors used, we only know that $1\leq k\leq n$.
Then the number of colors used in a coloring is equal to the number of none empty color classes~: $k=|\{V_i\in \mathbf{V}\ |\ |V_i|\geq1\}|$.
Constraint (5) indicates that two adjacent vertices cannot be in the same color class. That is the coloring constraint.
Constraint (4) forces the color classes to be ordered from the largest to the smallest size.
With this convention, the objective function can be expressed as equation~(1).
An optimal sum coloring of graph $G$, noted $\mathbf{V}^*$, has its objective function equals to the chromatic sum: $f_\Sigma(\mathbf{V}^*)=\Sigma(G)$.

$MSCP$ is an NP-hard problem\cite{Kubicka1991} and exact methods to solve it are effective only on small instances or specific graphs. 
For general graphs, we use heuristics to get a sub-optimal solution\cite{Jin2017}. 
It provides an upper bound of the optimal solution. We can compare it to a lower bound to estimate the quality of the solution. 

This paper presents a new way to find lower bounds for this problem. 
We obtain it by relaxing MSCP into an Integer Partition Problem (IPP). 
A similar approach has recently be use by Lecat, Lucet and Li~\cite{Lecat2017_2}. 
They find a lower bound, called $LBM\Sigma$, using the notion of \emph{motif} that improve largely the best lower bound of the literature (DIMACS benchmarks~\cite{dimacs96}). 
This paper improves this lower bound by counting the maximal number of independent sets of maximal size in a graph.
An Independent Set (IS) or stable set is a set of vertices of $G$, no two of which are adjacent.
A color class is by definition an IS.
Others approaches designed to find the lower bound for $MSCP$ are proposed in~\cite{Wu2018,Jin2016,Wu2013}.

Experiments on standard benchmarks DIMACS~\cite{dimacs96} of graph instances show that we improve the lower bound for several graphs and we sometimes prove their optimal value by reaching their known upper bound.

In the following sections, we first present how we relax $MSCP$ to an $IPP$ (Section~\ref{sect:relaxation}). 
In Section~\ref{sect:solvingIPP}, we detail a new way of ordering integer partitions, and we use it to solve exactly our relaxed problem. 
Section~\ref{sect:LB_GVCP} shows how to extend this lower bound to GVCP.
We show and analyze our results in Section~\ref{sect:result}, and then we conclude. 




\section{Relaxation as an integer partition problem}\label{sect:relaxation}

\subsection{Integer partitions}

We relax the problem of sum coloring $MSCP$ into a problem of integer partition, denoted $IPP_{\Sigma M}$, with the same objective function but with less constraints.

An integer partition is a way of writing a positive integer $n$ as a sum of other positive integers.
We say that the vector $\mathbf{a}=(a_i)_{1\leq i\leq n} \in \mathbb{N}^n$ is a partition of $n$ if~:
\begin{equation*}
(IP)\left\{
\begin{array}{l}
\displaystyle \sum_{i=1}^n a_i = n\\[1mm]
\displaystyle a_i \geq a_{i+1},\ \forall i=1..n-1\\[1mm] 
\end{array}
\right. 
\end{equation*}

We arbitrarily choose to rank $a_i$ in decreasing order. 
Partitions can be graphically visualized with Young diagrams~\cite{Young1900} or Ferrer diagrams. 
Figure~\ref{fig:youngDiagram_12} represents the partition of the integer $12=4+4+3+1+0+0+0+0+0+0+0+0$.
The vector of this partition is: $\mathbf{a}=(4,4,3,1,0,0,0,0,0,0,0,0)$ or simply noted $\mathbf{a}=(4,4,3,1)$ because the two first lines have four squares, the third line, three squares and the last one, one square.

\begin{figure}[H]
\centering
\includegraphics[width=0.1\linewidth]{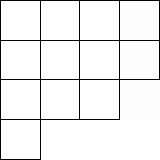}
\caption{Example of Young diagram for the partition of integer $12$ into a sum of four integers~: $4$, $4$, $3$ and $1$ ($12=4+4+3+1$). Each line corresponds to one positive integer of the partition. Integers are ranked in descending order.}
\label{fig:youngDiagram_12}
\end{figure}

\subsection{Definition of $IPP_{\Sigma M}$}

Instead of considering a set partition (or equivalently vertices partition), we focus on the cardinality of each set and we study an integer partition of the number $n=|V|$, corresponding to the number of vertices. We note for each vertices set $V_i$, its cardinality $a_i=|V_i|$.
The objective function (1) of $MSCP$ becomes $\sum_{i=1}^n ia_i$.

By this way, the two first constraints of $MSCP$ (constraints (2) and (3)) imply $\sum_{i=1}^n a_i = n$.
Constraint (4) of $MSCP$ is still valid with the new notations: $a_i\geq a_{i+1}$.
We represent a coloring with Young diagram where each square represents a vertex. 
The number of color used is the number of lines. 
Squares in the same line are in the same color class. 
Figure~\ref{fig:Young-coloring} represents Young diagrams of the two colorings presented in Figure~\ref{fig:MSCP}. 

\begin{figure}[H]
\centering
\includegraphics[width=0.6\linewidth]{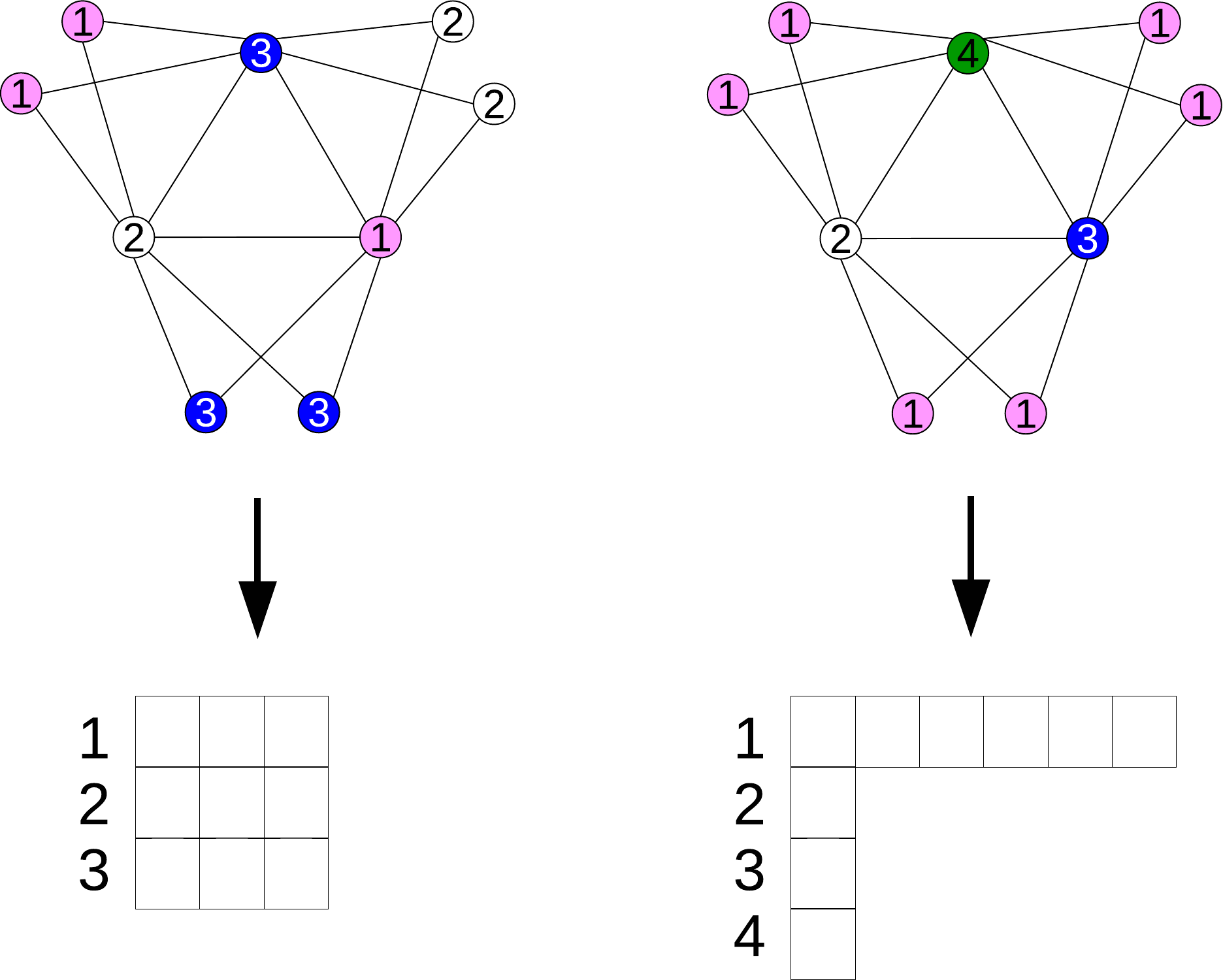}
\caption{Young diagrams of two colorings of the same graph as in Figure~\ref{fig:MSCP}. 
Each square represents a vertex. Squares on the same line share the same color.
The objective function value of $MSCP$ is calculated by counting the number of squares for each line: $\sum_{i=1}^n ia_i$.}
\label{fig:Young-coloring}
\end{figure}

The \textit{stability number} of $G$, noted $\alpha(G)$, is the size of the maximum IS. 
It indicates the maximum number of columns of a Young diagram that it is possible to use for a coloring, therefore $a_i \leq \alpha(G)$.
Moreover, we will define $m$ as an upper bound of the maximum number of color classes of size $\alpha(G)$ which can be in a coloring.
Then, we can add an extra constraint~: $|\{a_i | a_i=\alpha(G)\}| \leq m$ or equivalently if $m<n$ to $a_{m+1}\leq\alpha(G)-1$.
Finding $\alpha(G)$ is a NP-hard problem~\cite{Garey79} and counting all maximal IS is \#-P complete~\cite{Valiant1979}. 
Therefore, there are no polynomial-time algorithm able to solve these problems, unless $P=NP$.
However, the graph size for which it is possible to solve them exactly in a reasonable time (less a few minutes) is around 1000 vertices for random graphs with $0.5$ density by running a solver such as 
MoMC~\cite{Li2017}, the state-of-art exact maximum clique algorithm.
We define $\overline{\alpha}$ an upper bound of $\alpha(G)$, for which the above constraints are still true: $a_i \leq \overline{\alpha}$ and $|\{a_i | a_i=\overline{\alpha}\}| \leq m$.
If MoMC takes too much time to compute $\alpha(G)$, then a possible upper bound of the largest IS size, $\overline{\alpha}$, is the highest positive integer, $k$, verifying: $|\{v\in V,\, d_{\overline{G}}[v]\geq k\}|\geq k$ with $d_{\overline{G}}[v]$ the degree of the vertex $v$ in  $\overline{G}$, the complement graph of $G$\footnote{Notice that the maximum IS of a graph $G$ is the maximum clique of $\overline{G}$.}.
Another upper bound is the number of colors used for a vertex coloring of $\overline{G}$ not necessarily optimal.

The \textit{chromatic strength} of a graph $G$, noted $s(G)$, is the minimum number of colors used in all the optimal colorings of $MSCP$. 
It indicates the minimum number of lines of a Young diagram that are needed for a coloring. We note $\underline{s}$ a lower bound of $s(G)$ and we notice that~:
\begin{equation}
s(G)\geq\chi(G)\geq\left\lceil\frac{n}{\alpha(G)}\right\rceil
\end{equation}
Indeed the \textit{chromatic number} $\chi(G)$ is at least a lower bound of $s(G)$ because an optimal coloring of $MSCP$ is at least a legal coloring for GVCP.
Moreover, a simple lower bound of $\chi(G)$ is $\left\lceil\frac{n}{\alpha(G)}\right\rceil$ or at least $\left\lceil\frac{n}{\overline{\alpha}}\right\rceil$.
The number of coefficients of $\mathbf{a}$ not null (or equivalently the number of lines of Young diagram) is at least higher than $s(G)$ and therefore higher than $\underline{s}$, i.e. $|\{a_i | a_i\geq 1\}| \geq \underline{s}$ or equivalently $a_{\underline{s}}\geq 1$ because $(a_i)$ is a decreasing suite of integer.



Given a graph $G=(V,E)$, knowing $n=|V|$ its size, $\overline{\alpha}$ an upper bound of its stability number, $\underline{s}$ a lower bound of its chromatic strength and $m$, an upper bound of the maximum number of color classes of size $\overline{\alpha}$, we define $IPP_{\Sigma M}$ problem as:
$$
(IPP_{\Sigma M})\left\{
\begin{minipage}[c]{0.7\linewidth}\vspace{-5mm}
\begin{eqnarray}
\displaystyle \text{Min.}   &   \displaystyle f_{\Sigma M}(\mathbf{a})= \sum \limits_{i=1}^n i a_i\\[-0.5mm]
\text{s.c.} &\displaystyle \sum_{i=1}^n a_i = n\label{eq:partition1bis}\\[-0.5mm]
            &\displaystyle a_i \geq a_{i+1}, &\forall i=1..n-1\label{eq:partition2bis}\\[-0.5mm]
            &\displaystyle a_i \leq \overline{\alpha},&\forall i=1..n\label{eq:coloringyy}\\[-0.5mm]
            &\displaystyle |\{a_i | a_i=\overline{\alpha}\}| \leq m\label{eq:constraintm}\\[-0.5mm]
            &a_{\underline{s}}\geq 1\label{eq:LBchi}\\[-0.5mm]
            & \mathbf{a}=(a_i)_{1\leq i\leq n} \in \mathbb{N}^n
\end{eqnarray}
\end{minipage}
\right. \nonumber
$$
$IPP_{\Sigma M}$ depends only of four integers: $n$, $\overline{\alpha}$, $\underline{s}$ and $m$.
We denote $\Sigma M$ the optimal objective function value of one of its optimal partition, $\mathbf{a}^*$: $f_{\Sigma M}(\mathbf{a}^*)=\Sigma M$.
Therefore, $\Sigma M$ is function of these four parameters: $\Sigma M(n, \overline{\alpha}, \underline{s}, m)$.
\begin{Thm}
  $IPP_{\Sigma M}$ is a relaxation of $MSCP$, that is, 
$\Sigma M(n, \overline{\alpha}, \underline{s}, m)$ is a lower bound of $\Sigma(G)$:
\begin{equation}
\Sigma M\leq \Sigma(G)
\end{equation}
\end{Thm}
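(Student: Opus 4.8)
The plan is to show that any feasible (indeed, optimal) solution of $MSCP$ gives, via the cardinality map $a_i = |V_i|$, a feasible solution of $IPP_{\Sigma M}$ with the same objective value; since $IPP_{\Sigma M}$ is a minimization problem over a superset of these images, its optimum $\Sigma M$ can only be smaller. So first I would fix an optimal coloring $\mathbf{V}^* = (V_1^*,\dots,V_n^*)$ of $MSCP$, so that $f_\Sigma(\mathbf{V}^*) = \Sigma(G)$, and set $a_i = |V_i^*|$ for $i = 1,\dots,n$. The objective values match immediately: $f_{\Sigma M}(\mathbf{a}) = \sum_{i=1}^n i a_i = \sum_{i=1}^n i |V_i^*| = f_\Sigma(\mathbf{V}^*) = \Sigma(G)$.

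Next I would verify the six constraints of $IPP_{\Sigma M}$ one by one. Constraints (\ref{eq:partition1bis}) and the integrality $\mathbf{a}\in\mathbb{N}^n$ follow from (\ref{eq:partition1})--(\ref{eq:partition2}): the $V_l^*$ partition $V$, so $\sum_i |V_i^*| = |V| = n$. Constraint (\ref{eq:partition2bis}), $a_i \geq a_{i+1}$, is exactly the $MSCP$ constraint (\ref{eq:decrease}). For (\ref{eq:coloringyy}), each $V_i^*$ is an independent set by the coloring constraint (\ref{eq:consColoring}), hence $|V_i^*| \leq \alpha(G) \leq \overline{\alpha}$. For (\ref{eq:constraintm}), the number of color classes of size exactly $\overline{\alpha}$ is at most $m$ by the very definition of $m$ (an upper bound on the number of color classes of maximum stable-set size), and any class of size $\overline{\alpha}$ is necessarily a maximum independent set since $\overline{\alpha} \geq \alpha(G)$ forces equality. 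For (\ref{eq:LBchi}), $a_{\underline{s}} \geq 1$: the number of nonempty classes in $\mathbf{V}^*$ is a legal number of colors for a sum coloring, hence at least $s(G) \geq \underline{s}$; since $(a_i)$ is nonincreasing, $a_{\underline{s}} \geq a_{s(G)} \geq 1$.

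Having shown $\mathbf{a}$ is feasible for $IPP_{\Sigma M}$, I would conclude: $\Sigma M = f_{\Sigma M}(\mathbf{a}^*) \leq f_{\Sigma M}(\mathbf{a}) = \Sigma(G)$, which is (\ref{eq:relax}). A small technical point to handle carefully is the treatment of ties at the maximum size when $\overline{\alpha}$ is a strict upper bound rather than $\alpha(G)$ itself: if $\overline{\alpha} > \alpha(G)$ then no class attains size $\overline{\alpha}$ and constraint (\ref{eq:constraintm}) is vacuously satisfied, so the argument still goes through; the definitions in the excerpt already arrange that $|\{a_i \mid a_i = \overline{\alpha}\}| \leq m$ in all cases. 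The only genuine subtlety — the "main obstacle" such as it is — is making sure every inequality defining $\overline{\alpha}$, $\underline{s}$ and $m$ is used in the direction that makes the relaxation valid (upper bounds on sizes and counts, lower bound on the number of colors); once those directions are lined up correctly the proof is a routine constraint-by-constraint check.
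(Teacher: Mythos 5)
Your proof is correct and follows essentially the same route as the paper: the paper's (one-line) argument is exactly that the $MSCP$ constraints imply the $IPP_{\Sigma M}$ constraints on the cardinality vector $a_i=|V_i|$, so the minimum of the relaxed problem cannot exceed $\Sigma(G)$. Your version is simply a fully detailed, constraint-by-constraint writing of that argument, and in fact handles more carefully than the paper the points where optimality of $\mathbf{V}^*$ and the definitions of $\overline{\alpha}$, $m$, $\underline{s}$ are needed (notably $a_{\underline{s}}\geq 1$, which relies on $\underline{s}\leq s(G)$ and not on the coloring constraint alone).
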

\begin{Proof}
 The fourth constraint of $MSCP$ (coloring constraint (5)) imply constraints (11-13), then $IPP_{\Sigma M}$ is a relaxation of $MSCP$.

\end{Proof}

\subsection{Analysis}

Note that if we remove the constraint $(13)$ from $IPP_{\Sigma M}$ (or equivalently if we fix $m$ at $+\infty$), we would get a problem equivalent to the one used by Lecat et al.~\cite{Lecat2017_2} to find the $LBM\Sigma$ lower bound: $LBM\Sigma=\Sigma M(n, \overline{\alpha}, \underline{s}, m=+\infty)$ with $\overline{\alpha}=\alpha(G)$, if it is known and~:
\begin{equation}
\underline{s}=
\left\{
\begin{array}{ll}
\chi(G)& \text{if it is known}\\
\left\lceil\frac{n}{\alpha(G)}\right\rceil&\text{otherwise}
\end{array}
\right.
\end{equation}
Therefore, $\Sigma M$ is an improvement of $LBM\Sigma$, obtained by using the integer $m$ corresponding to the maximum number of IS of size $\alpha(G)$ that is possible to use in a coloring of $G$ graph. 
\begin{equation}
LBM\Sigma \leq \Sigma M \leq \Sigma(G)
\end{equation}
For this reason, we need to find the smallest possible value of $m$ to have the highest possible value of $\Sigma M$.
To obtain our value of $m$, we define a new graph called \emph{the maximum independent set graph}, noted $\tilde{G}=(\tilde{V},\,\tilde{E})$ as follow.
\begin{Def}
Let a graph $G=(V,E)$, we call the \textbf{maximum independent set graph} of $G$, the graph $\tilde{G}=(\tilde{V},\,\tilde{E})$ build as follow~: 
\begin{itemize}
  \item each vertex of $\tilde{V}$ is a maximum independent set\footnote{independent set of size $\alpha(G)$} of $G$;
  \item it exists an edge $e=(u,v)\in\tilde{E}$ between two independent sets $u\in\tilde{V}$ and $v\in\tilde{V}$, if and only if $u$ and $v$ have at least one vertex $w\in V$ of $G$ in common (i.e. $u\cap v\neq\emptyset$); we said that $u$ and $v$ are \emph{incompatible} because both can not be part of the same coloring of $G$.
\end{itemize}
\end{Def} 

\begin{Thm}
An optimal sum coloring $\mathbf{V^*}$, of a graph $G$ (i.e. an optimal solution of $MSCP$) can not have more than $\alpha(\tilde{G})$ color classes of size $\alpha(G)$~:
$$
|\{V_i\in\mathbf{V^*},\,|V_i|=\alpha(G)\}|\leq\alpha(\tilde{G})
$$
\end{Thm}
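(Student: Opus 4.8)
The plan is to argue that the color classes of size $\alpha(G)$ appearing in an optimal sum coloring $\mathbf{V^*}$ form an independent set in the graph $\tilde{G}$, from which the bound $\alpha(\tilde{G})$ follows immediately by definition of the stability number. First I would take an optimal solution $\mathbf{V^*}=(V_1,\dots,V_n)$ of $MSCP$ and consider the sub-collection $\mathcal{S}=\{V_i\in\mathbf{V^*}\ :\ |V_i|=\alpha(G)\}$ of those color classes whose size equals the stability number. Each such $V_i$ is a color class, hence an independent set of $G$ by constraint~(5), and since $|V_i|=\alpha(G)$ it is in fact a \emph{maximum} independent set of $G$; therefore each $V_i\in\mathcal{S}$ is, by the Definition above, a vertex of $\tilde{G}$, i.e. $\mathcal{S}\subseteq\tilde{V}$.

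Next I would show that $\mathcal{S}$ is an independent set of $\tilde{G}$. Take two distinct classes $V_i,V_j\in\mathcal{S}$ with $i\neq j$. Because $\mathbf{V^*}$ is a partition of $V$ (constraints~(2) and~(3)), we have $V_i\cap V_j=\emptyset$. But an edge of $\tilde{G}$ joins exactly those pairs of maximum independent sets that share at least one vertex of $G$; since $V_i$ and $V_j$ are disjoint, there is no edge $(V_i,V_j)$ in $\tilde{E}$. As $V_i$ and $V_j$ were arbitrary distinct elements of $\mathcal{S}$, no two vertices of $\mathcal{S}$ are adjacent in $\tilde{G}$, so $\mathcal{S}$ is an independent set of $\tilde{G}$.

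Finally, since $\mathcal{S}$ is an independent set of $\tilde{G}$, its cardinality is at most the stability number of $\tilde{G}$, i.e. $|\mathcal{S}|\leq\alpha(\tilde{G})$. This is precisely the claimed inequality $|\{V_i\in\mathbf{V^*},\,|V_i|=\alpha(G)\}|\leq\alpha(\tilde{G})$.

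I do not expect any serious obstacle here: the proof is essentially an unwinding of the definition of $\tilde{G}$ together with the disjointness of the parts of a partition. The only point that deserves a careful phrasing is the observation that a color class of size exactly $\alpha(G)$ is a \emph{maximum} independent set (not merely a maximal or arbitrary one), so that it genuinely corresponds to a vertex of $\tilde{G}$; this is immediate from the definition of $\alpha(G)$ as the size of the largest independent set, combined with constraint~(5) which guarantees every color class is independent. One could also add a remark that empty classes or classes of size $<\alpha(G)$ are simply irrelevant to $\mathcal{S}$, so the argument does not depend on how many colors $\mathbf{V^*}$ uses.
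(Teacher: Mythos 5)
Your proposal is correct and follows essentially the same reasoning as the paper: color classes of size $\alpha(G)$ are maximum independent sets of $G$, pairwise disjoint by the partition constraints, hence pairwise non-adjacent in $\tilde{G}$, so their number is bounded by $\alpha(\tilde{G})$. The paper states this compatibility argument rather tersely (and notes it holds for any legal coloring, not just optimal sum colorings), while your write-up is simply a more explicit unwinding of the same idea.
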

\begin{Proof}
Finding the maximum IS of $\tilde{G}$ corresponds to find the maximal number of maximal ISs compatibles in $G$.
In others words, $\alpha(\tilde{G})$ is the maximum number of IS of size $\alpha(G)$ that are possible to include in a same coloring of $G$. 
This is true not only for sum coloring problem but also for all coloring problems.
\end{Proof}

Therefore, we use in experimental tests, when it is possible to compute it:
\begin{equation}
m=\alpha(\tilde{G})
\end{equation}

\section{Solving $IPP$}\label{sect:solvingIPP}
\subsection{Ordering integer partitions}\label{sect:comparison}

We defined a relaxation problem, $IPP_{\Sigma M}$, of $MSCP$.
Our aim is to solve exactly this problem in order to provide a lower bound to $MSCP$.
We define an order between integer partitions corresponding to the objective function $f(\mathbf{a})= \sum \limits_{i=1}^n i a_i$.
Given a partition $\mathbf{a}$, we define a set of successor partitions if we move down only one square in the corresponding Young diagram of $\mathbf{a}$ by the following rule~:

\begin{algorithm}[H]
 \DontPrintSemicolon
\Input{$n$}
  \SetKwFunction{FMain}{$successor$}
  \SetKwProg{Fn}{Fonction}{:}{}
  \Fn{\FMain{$\mathbf{a}$}}{
  $succ \leftarrow \varnothing$\;
  \ForEach{$i=1..n-1$}{
  	\If{$a_i\neq1\ \wedge\ a_{i+1}<a_i$}{
    	$j\leftarrow i+1$\;
        \While{$a_j \geq a_i -1$}{
        	$j\leftarrow j+1$
        }
        $\mathbf{b}\leftarrow change(\mathbf{a},i,j)$\textit{ // we note~: $\mathbf{b}\leftarrow \mathbf{a} \oplus (i,j)$}\;
        $succ \leftarrow succ\cup\{\mathbf{b}\}$
    }
  }
  \Return $succ$
}
\label{algo:succ}\caption{Function that return all the successors of the partition $\mathbf{a}$.}
\end{algorithm}
The $change$ function, noticed $\oplus$ operator, is defined as follows~:

\begin{algorithm}[H]
 \DontPrintSemicolon
\Input{$n$}
  \SetKwFunction{FMain}{$change$}
  \SetKwProg{Fn}{Fonction}{:}{}
  \Fn{\FMain{$\mathbf{a}$,$i$,$j$}}{
  \ForEach{$k=1..n$}{
  		$b_k\leftarrow a_k$
	}
	$b_i\leftarrow a_i - 1$\;
    $b_j\leftarrow a_j + 1$\;
  \Return $\mathbf{b}$
}
\label{algo:change}\caption{Function that return a neighbor partition of $\mathbf{a}$, just two values of $\mathbf{b}$ differ from $\mathbf{a}$; we note $\mathbf{b}\leftarrow\mathbf{a}\oplus(i,j)$}
\end{algorithm}

Figure~\ref{fig:successor} illustrates the successor operator with Young diagram.
Algorithm~\ref{algo:succ} shows that it is possible to move down the last square of each line if~:
\begin{itemize}
 \item the following line has not the same number of squares; it is why the red square of first line of Figure~\ref{fig:successor} can not move.\\[-7mm]
 \item the line has not an unique square; it is why the orange square of last line of Figure~\ref{fig:successor} can not move. 
\end{itemize}
When it is possible to move down a square (case of blue and green squares of Figure~\ref{fig:successor}), 
the square takes last place of the first possible line.
Blue square of Figure~\ref{fig:successor} can take the place in the just following line. 
But green square of Figure~\ref{fig:successor} must go two lines down. 

\begin{remark}
By construction, if $\mathbf{a}$ is an admissible solution of $IPP_{\Sigma M}$, then all of $\mathbf{a}$'s succesor interger partitions are also an admissible solution of $IPP_{\Sigma M}$.
In other words, by noting $\Omega(IPP_{\Sigma M})$ the set of all admissible solutions of $IPP_{\Sigma M}$:
if $\mathbf{a}\in\Omega(IPP_{\Sigma M})$, then $succesor(\mathbf{a}) \subset \Omega(IPP_{\Sigma M})$.
\end{remark}

\begin{remark}
 If $\mathbf{a}$ is an admissible solution of $IPP_{\Sigma M}$ without succesor i.e. $successor(\mathbf{a})=\varnothing$, then it means that $\mathbf{a}$ is the column partition $\mathbf{a}=(\underbrace{1,..., 1}_n)$.
\end{remark}

\begin{figure}[H]
\centering
\includegraphics[width=0.5\linewidth]{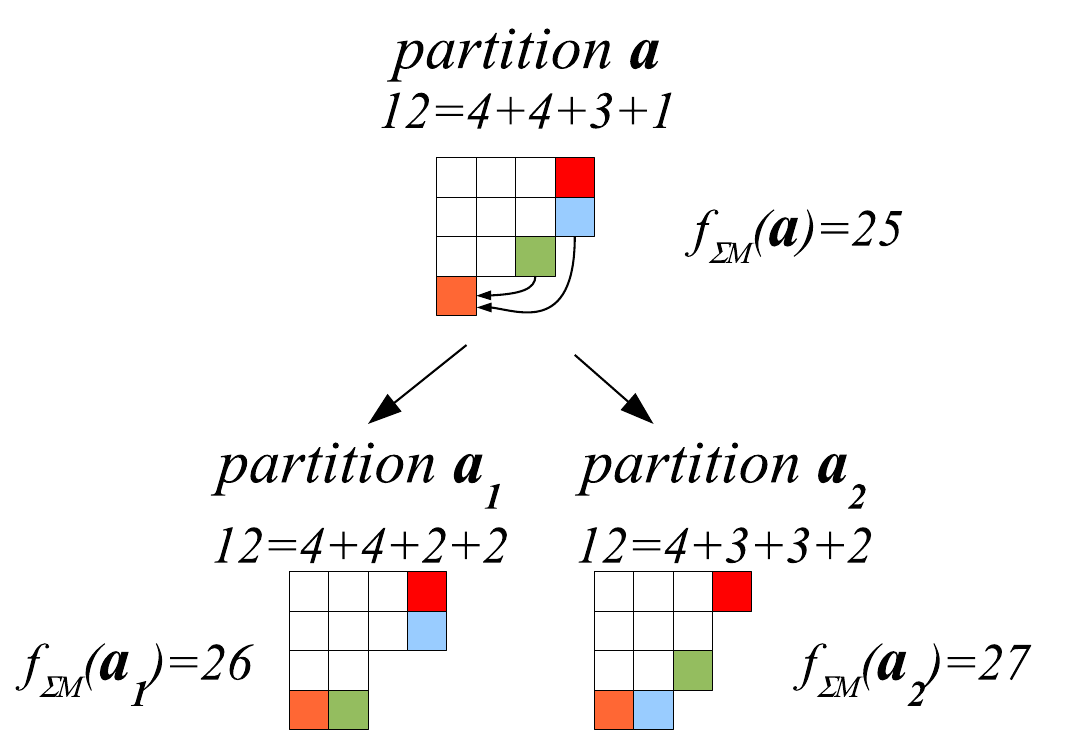}
\caption{Illustration for the way we find the two successors of a partition $\mathbf{a}$: $\mathbf{a_1}$ and $\mathbf{a_2}$. 
From $\mathbf{a}$, we move a square from a line $i(=2$ or $3)$ to the closest line $j$ where $a_i > a_j$; $j=4$.}
\label{fig:successor}
\end{figure}

We define symmetrically the $predecessor$ function~:

\begin{algorithm}[H]
 \DontPrintSemicolon
\Input{$n$, $\overline{\alpha}$, $\underline{s}$, $m$}
  \SetKwFunction{FMain}{$predecessor$}
  \SetKwProg{Fn}{Fonction}{:}{}
  \Fn{\FMain{$\mathbf{a}$}}{
  $pred \leftarrow \varnothing$\;
  \ForEach{$i=2..n-1$}{
  	\If{$ ~~~~~[(i\leq \underline{s}\ \wedge\ a_i>1)\ \ \vee\ \ (i> \underline{s}\ \wedge\ \ a_i>0)]\newline ~~~~\wedge\ \ [(i\leq m\ \wedge\ \ a_{i-1}<\overline{\alpha})\ \ \vee\ \ (i> m\ \wedge\ \ a_{i-1}<\overline{\alpha}-1)]\newline ~~~~\wedge\ \ a_{i+1}<a_i$}{
    	$j\leftarrow i-1$\;
        \While{$j\neq1\ \wedge\ a_j \geq a_{j -1}$}{
        	$j\leftarrow j-1$
        }
        $\mathbf{b}\leftarrow change(\mathbf{a},i,j)$\;
        $pred \leftarrow pred\cup\{\mathbf{b}\}$
    }
  }
  \Return $pred$
}
\label{algo:pred}\caption{Function that returns all the predecessors of the partition $\mathbf{a}$.}
\end{algorithm}

Figure~\ref{fig:predecessor} illustrates the predecessor operator with Young diagram.
Algorithm~\ref{algo:pred} shows that it is possible to move up the last square of each line $i>1$ if~:
\begin{itemize}
 \item the line just below $(i+1)$ has not the same number of squares; it is why the red squares of the two first lines of partition $\mathbf{a}$ of Figure~\ref{fig:predecessor}-up-right can not move.\\[-7mm]
 \item the line just above $(i-1)$ has strictly less than $\overline{\alpha}$ squares if $i-1\leq m$; it is why the red square of the third line of partition $\mathbf{a}$ of Figure~\ref{fig:predecessor}-up-right can not move.\\[-7mm]
 \item the line just above $(i-1)$ has strictly less than $\overline{\alpha}-1$ squares if $i-1>m$; this constraint and the previous one define the hashed area (forbidden area) of Figure~\ref{fig:predecessor}-up-right.\\[-7mm]
 \item the square of line $\underline{s}$ is alone on its line; it is why the red square of last line of partition $\mathbf{a}$ of Figure~\ref{fig:predecessor}-up-right can not move. This mandatory square is noted in bold on Figure~\ref{fig:predecessor}-up-right.
\end{itemize}
When it is possible to move a square up (case of blue and green squares of partition $\mathbf{b}$ of Figure~\ref{fig:predecessor}-center), 
the square takes last place of the first possible line (line 3 and 5 respectively).

\begin{figure}[H]
\centering
\includegraphics[width=0.7\linewidth]{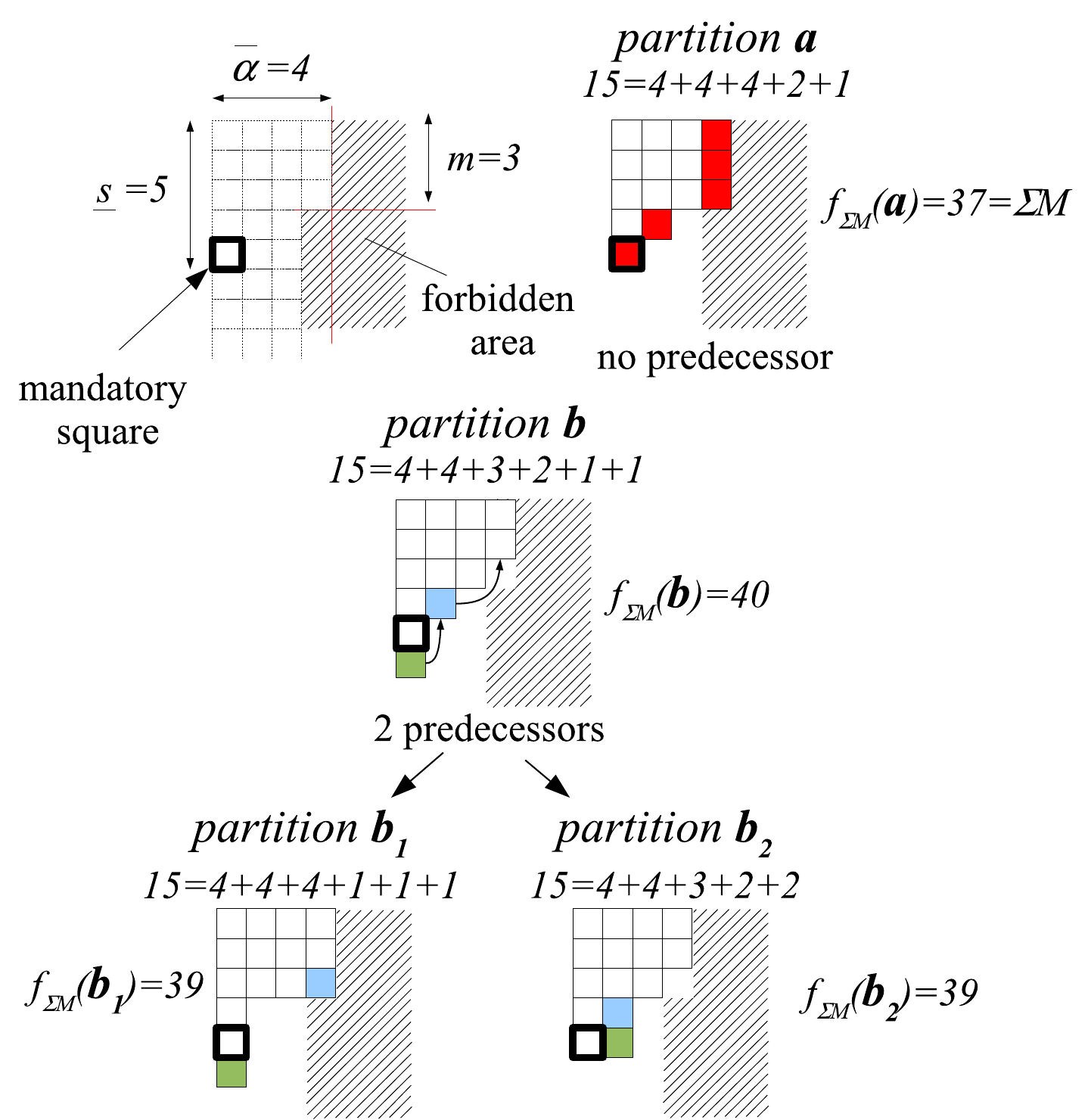}
\caption{Illustration of the way we find a predecessor of a partition.
The partition $\mathbf{a}$ (up-right figure) has no predecessor because the hashed area is the forbidden area and the bold square (last line) is mandatory (a square must be place on it).
The partition $\mathbf{b}$ has two possible predecessors, $\mathbf{b_1}$ and $\mathbf{b_2}$, by moving blue square or respectively green square to line 3 or respectively line 5.}
\label{fig:predecessor}
\end{figure}

\begin{remark}
 By construction, if $\mathbf{b}\in\Omega(IPP_{\Sigma M})$, then $predecessor(\mathbf{b}) \subset \Omega(IPP_{\Sigma M})$.
\end{remark}

\begin{Thm}\label{thm:costsuccessor}
If a partition $\mathbf{b}$ is a successor of the partition $\mathbf{a}$ (respectively $\mathbf{b}$ is a predecessor of $\mathbf{a}$), so that $\mathbf{b}\leftarrow\mathbf{a}\oplus(i,j)$, therefore $j>i$ (resp. $i<j$) and
\begin{equation}
f(\mathbf{b})=f(\mathbf{a})+j-i>f(\mathbf{a})\text{ (resp. }<f(\mathbf{a}))
\end{equation}
\end{Thm}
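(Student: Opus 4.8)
The plan is to read off the identity directly from the definition of the $\oplus$ operator, and then to fix the sign of $j-i$ separately in the successor and predecessor cases by inspecting the two algorithms.

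First I would unfold the notation. By the $change$ function (Algorithm~\ref{algo:change}), writing $\mathbf{b}\leftarrow\mathbf{a}\oplus(i,j)$ means exactly $b_i=a_i-1$, $b_j=a_j+1$ and $b_k=a_k$ for every $k\notin\{i,j\}$. Since the objective $f(\mathbf{a})=\sum_{l=1}^n l\,a_l$ is linear in the coordinates, every term with $k\notin\{i,j\}$ cancels in the difference, leaving
\[
f(\mathbf{b})-f(\mathbf{a}) \;=\; i\,(b_i-a_i)+j\,(b_j-a_j) \;=\; i\cdot(-1)+j\cdot(+1) \;=\; j-i ,
\]
which is precisely the claimed equality $f(\mathbf{b})=f(\mathbf{a})+j-i$, valid in both cases. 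It then remains only to determine the sign of $j-i$.

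Next I would track where the index $j$ comes from in each algorithm. If $\mathbf{b}$ is a successor of $\mathbf{a}$, then in Algorithm~\ref{algo:succ} the variable $j$ is initialised to $i+1$ and is only incremented inside the $while$-loop, so $j\geq i+1$, hence $j-i\geq 1>0$ and $f(\mathbf{b})>f(\mathbf{a})$. Symmetrically, if $\mathbf{b}$ is a predecessor of $\mathbf{a}$, then in Algorithm~\ref{algo:pred} the variable $j$ is initialised to $i-1$ and is only decremented, so $j\leq i-1$, hence $j-i\leq -1<0$ and $f(\mathbf{b})<f(\mathbf{a})$. (This is the case $j<i$; the ``$i<j$'' printed in the statement for the predecessor case should read $j<i$.)

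There is no genuine obstacle here: the theorem is an immediate consequence of the linearity of $f$ together with the elementary facts that the successor rule moves a square to a strictly larger line index while the predecessor rule moves it to a strictly smaller one. The only point deserving a line of justification is that the $while$-loops always stop with $j$ in $\{1,\dots,n\}$, so that $change$ is invoked on a legitimate pair with $i\neq j$; for the successor this follows because an admissible $\mathbf{a}$ with $a_i\neq 1$ must contain a zero coordinate, which forces termination no later than there, and for the predecessor it is enforced directly by the guard $j\neq 1$.
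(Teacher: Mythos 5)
Your proof is correct and takes essentially the same route as the paper, whose entire argument is the linearity computation $f(\mathbf{b})=\sum_k k b_k=\sum_k k a_k - i + j=f(\mathbf{a})+j-i$. Your extra steps are sound refinements the paper leaves implicit: reading off $j\geq i+1$ (resp.\ $j\leq i-1$) from the initialisation and monotone update of $j$ in the two algorithms to get the strict inequalities, and correctly noting that the ``(resp.\ $i<j$)'' in the statement is a typo for $j<i$.
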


\begin{proof}
\begin{eqnarray*}
f(\mathbf{b})&=&\sum \limits_{k=1}^n k b_k=\sum \limits_{k\neq i;\,k\neq j} k b_k+ib_i+jb_j\\
             &=&\sum \limits_{k\neq i;\,k\neq j} k a_k+i(a_i-1)+j(a_j+1)=\sum\limits_{k=1}^n k a_k -i +j\\
             &=&f(\mathbf{a})-i +j
\end{eqnarray*}
\end{proof}

If we list the partitions of an integer, we can compare their costs (of sum coloring) using this theorem. 
We get a comparison relation between partitions which is similar to the comparison relation between motifs used in~\cite{Lecat2017_2} (definition 4).

As an example, the Figure~\ref{fig:graph9} presents a graph with $n=9$ vertices for which it exists an unique maximum IS of size 6 ($\overline{\alpha}=\alpha(G)=6$ and $m=1$).
Moreover we take $\underline{s}$ equals to the chromatic number $\chi(G)=3$. 
Figure~\ref{fig:dag} details all the integer partitions of $n=9$ in the form of Young diagram as well as their order. 
The optimal solution of $IPP_{\Sigma M}$ is the integer partition without valid predecessor.
 
\begin{figure}[!ht]
\centering
\includegraphics[width=0.6\linewidth]{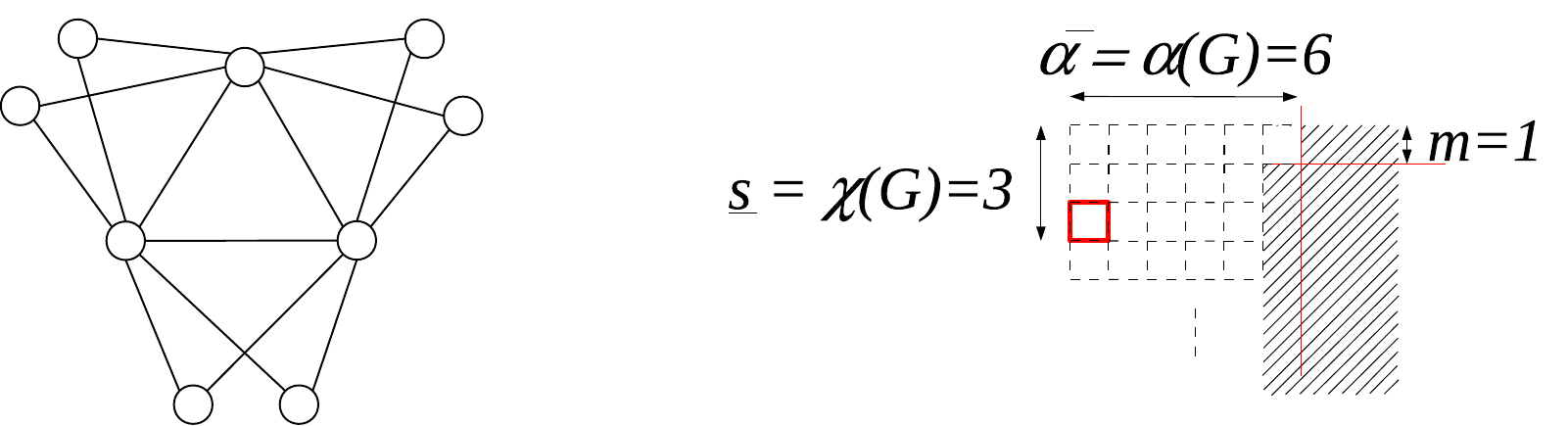}
\caption{Graph (left figure) with $n=9$ vertices with $\overline{\alpha}=\alpha(G)=6$ and $m=1$ (it exists an unique maximum IS of size 6) and we take $\underline{s}=\chi(G)=3$.
Then, the constraints on the Young diagram (right figure) imply that only one line can have 6 squares, the others has at most 5 square (i.e. no squares in the hatched area) 
and it must have at least one square on the third line (i.e. red square is mandatory).}
\label{fig:graph9}
\end{figure}

\begin{figure}[!ht]
\centering
\includegraphics[height=0.85\textheight]{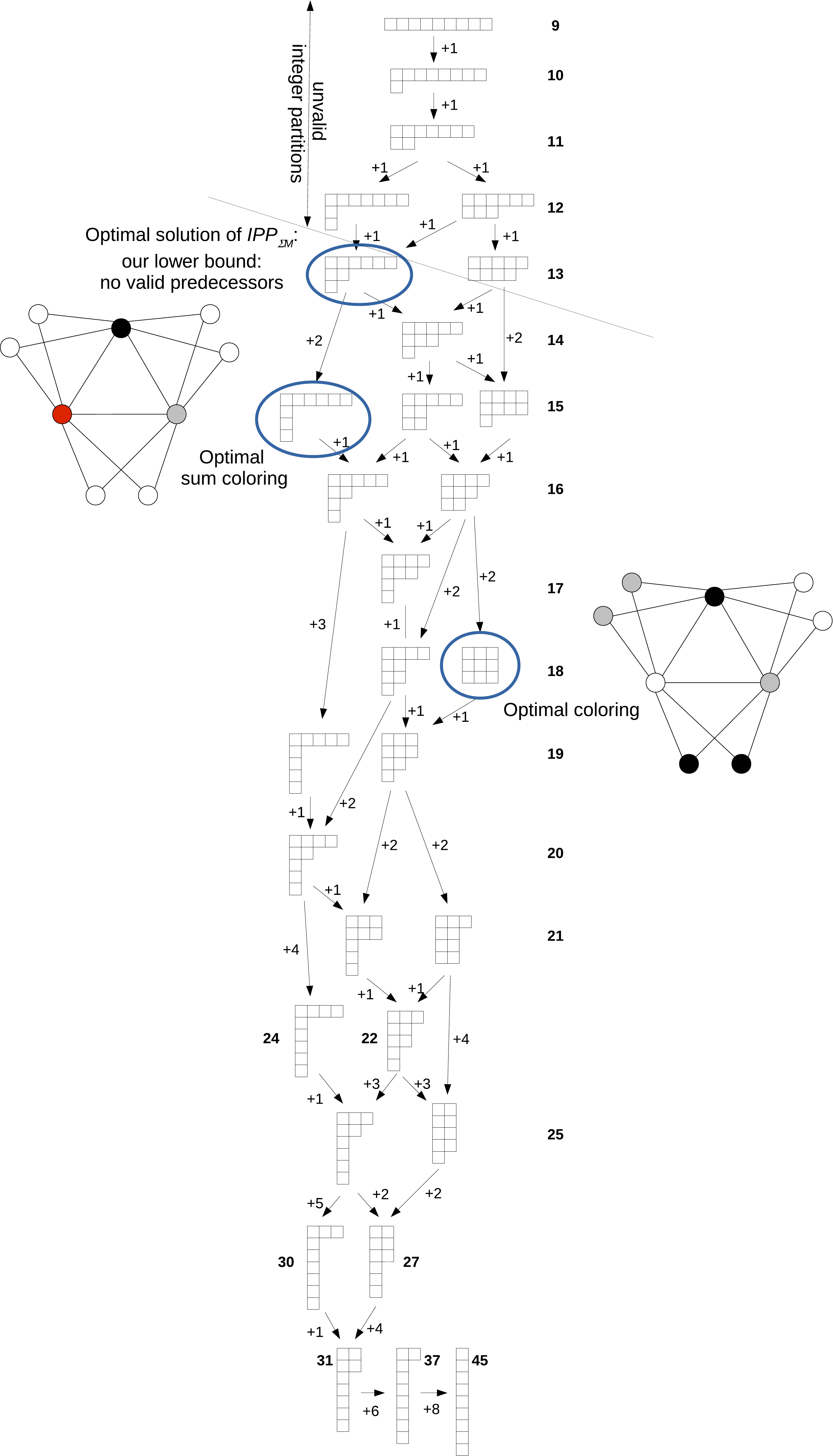}
\caption{{\bf Young Diagrams of all possible integer partitions succeeding a graph $G$.}
Each arrow indicates a successor of a partition and the number indicates the additional cost function (of sum coloring) to pass from a partition to the other.}
\label{fig:dag}
\end{figure}

\begin{Lemme}
For all $\mathbf{a}$ and $\mathbf{b}\in \Omega(IPP_{\Sigma M})$, it exists a set of $k>1$ integer partitions $\mathbf{c_i}\in \Omega(IPP_{\Sigma M})$, with $i=1...k$
so that $\mathbf{c_1}=\mathbf{a}$, $\mathbf{c_k}=\mathbf{b}$ and $\mathbf{c_{i+1}}\in predecessor(\mathbf{c_i})$ or $\mathbf{c_{i+1}}\in successor(\mathbf{c_i})$.
\end{Lemme}

\begin{proof}
It means that the graph of integer partitions (the vertices are the integer partitions and an edge links two integer partitions $\mathbf{a}$ and $\mathbf{b}$ if and only if  $\mathbf{a}\in predecessor(\mathbf{b})$ or $\mathbf{a}\in successor(\mathbf{b})$) is connexe. It is evident because it exists always a path between an integer partition and the column integer partition: $(\underbrace{1,..., 1}_n)$ where $n$ is the integer to partition.
\end{proof}

\begin{Thm}
The optimal integer partition of $IPP_{\Sigma M}$ is the integer partition without predecessor.
\end{Thm}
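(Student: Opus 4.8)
The plan is to prove the statement in two parts: first, that every optimal solution of $IPP_{\Sigma M}$ has no predecessor; and second, the converse, that a partition of $\Omega(IPP_{\Sigma M})$ with no predecessor is necessarily optimal. Together these also give uniqueness of the minimizer, so that ``the'' partition without predecessor is well defined.

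For the first part I would argue by contradiction. Let $\mathbf{a}^*$ be an optimal partition of $IPP_{\Sigma M}$ and suppose it admits a predecessor $\mathbf{b}$, i.e.\ $\mathbf{b}\leftarrow\mathbf{a}^*\oplus(i,j)$ with $j<i$ for indices produced by Algorithm~\ref{algo:pred}. By the remark stating that $predecessor(\mathbf{c})\subset\Omega(IPP_{\Sigma M})$ whenever $\mathbf{c}\in\Omega(IPP_{\Sigma M})$, the partition $\mathbf{b}$ is again admissible. By Theorem~\ref{thm:costsuccessor} applied to a predecessor move, $f(\mathbf{b})=f(\mathbf{a}^*)+j-i<f(\mathbf{a}^*)$, which contradicts the optimality of $\mathbf{a}^*$. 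Hence $predecessor(\mathbf{a}^*)=\varnothing$.

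For the converse I would show that any admissible partition that is not optimal does possess a predecessor; contraposing, a partition with no predecessor is optimal. Combined with the first part (which already guarantees that \emph{some} optimal partition has no predecessor) and the connectedness lemma (any two admissible partitions are joined by a path of successor/predecessor steps), this also forces uniqueness, so the optimal partition and the partition without predecessor coincide. The concrete route would be to describe explicitly the unique cost-minimal partition $\mathbf{a}^{\min}$ of $IPP_{\Sigma M}$ — obtained greedily by packing squares as high as possible in the Young diagram subject to $a_i\le\overline{\alpha}$, $|\{a_i=\overline{\alpha}\}|\le m$ and $a_{\underline{s}}\ge1$ — and then to check that if $\mathbf{a}\in\Omega(IPP_{\Sigma M})$ differs from $\mathbf{a}^{\min}$, the row carrying the ``lowest excess'' square makes the three bracketed tests of Algorithm~\ref{algo:pred} (the split on $i\le\underline{s}$ versus $i>\underline{s}$, on $i\le m$ versus $i>m$, and the threshold $\overline{\alpha}-1$) all hold, so a predecessor move is available. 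Equivalently one can run this dynamically: starting from $\mathbf{a}$ and repeatedly taking predecessors, the cost strictly decreases at each step by Theorem~\ref{thm:costsuccessor}, hence (the set of partitions being finite) the process terminates, and one argues it can only terminate at $\mathbf{a}^{\min}$.

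The main obstacle is precisely this converse step: one must verify that the branching conditions in Algorithm~\ref{algo:pred} are calibrated so that \emph{no} suboptimal admissible partition is a dead end of the predecessor relation — i.e.\ that whenever $\mathbf{a}$ is admissible and strictly costlier than the optimum, there exists a row $i$ at which a ``move up'' respecting all of the $IPP_{\Sigma M}$ constraints is legal. This is a finite but delicate case analysis on the shape of the Young diagram near the rows that violate the greedy left-packed form, leaning on the fact from Theorem~\ref{thm:costsuccessor} that a predecessor move always strictly lowers $f$. Once this is settled, the statement follows: an optimal partition has no predecessor (part one), a partition with no predecessor is optimal (contrapositive of the converse), and the minimizer is unique, so the optimal integer partition of $IPP_{\Sigma M}$ is exactly the partition without predecessor.
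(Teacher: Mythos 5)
Your first part coincides exactly with the paper's proof of this theorem: if an optimal $\mathbf{a}^*$ had a predecessor $\mathbf{b}$, then $\mathbf{b}$ is admissible (by the remark that $predecessor(\mathbf{c})\subset\Omega(IPP_{\Sigma M})$ for admissible $\mathbf{c}$) and $f(\mathbf{b})<f(\mathbf{a}^*)$ by Theorem~\ref{thm:costsuccessor}, contradicting optimality. That one direction is all the paper establishes here.

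Where you diverge is in reading the statement as an equivalence with uniqueness and attempting the converse (no predecessor $\Rightarrow$ optimal). That converse is indeed what one needs to take ``the integer partition without predecessor'' literally, and it is exactly the piece you leave open: the ``delicate case analysis'' that every admissible, strictly suboptimal partition passes the bracketed tests of Algorithm~\ref{algo:pred} at some row is the whole content of that direction. Your dynamic variant (repeatedly take predecessors; the cost strictly decreases, so the finite process stops) only shows the descent terminates at \emph{some} predecessor-free partition, and the connectedness lemma by itself does not upgrade this to uniqueness or optimality of that endpoint. So measured against your own, more ambitious plan there is a genuine unfinished step. Measured against the paper, however, you have fully reproduced its argument: the paper does not prove the converse within this theorem either, and instead secures optimality constructively afterwards by exhibiting the optimal partitions explicitly (Theorems~\ref{thm:cost_ws} and~\ref{thm:cost_decomposition}). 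If you want to close your converse, the most economical route is essentially that constructive one: write down the explicit left-packed minimizer $\mathbf{a}^{\min}$ and verify directly that it is the unique admissible partition admitting no predecessor, rather than arguing abstractly over all suboptimal Young-diagram shapes.
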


\begin{proof}
If $\mathbf{a}$ is an optimal integer partition of $IPP_{\Sigma M}$ and has at least one predecessor $\mathbf{b}\in predessor(\mathbf{a})$, therefore $f_{\Sigma M}(\mathbf{b})<f_{\Sigma M}(\mathbf{a})$ by theorem~\ref{thm:costsuccessor} with $\mathbf{b}\in \Omega(IPP_{\Sigma M})$. It refutes the optimality of $\mathbf{a}$. 
\end{proof}

\subsection{Resolution of the relaxed problem $IPP_{\Sigma M}$}\label{sect:resolution}

The optimal integer partition of $IPP_{\Sigma M}$ is the integer partition without predecessor.
\subsubsection{Without constraint $a_{\underline{s}}\geq1$}

We define $(IPP_0)$ an intermediate problem corresponding to $IPP_{\Sigma M}$ but without the constraint (13): $a_{\underline{s}}\geq1$. 
Let $\mathbf{a}^*_0$ the optimal partition of $(IPP_0)$; this partition is a function of $n$, $\overline{\alpha}$ and $m$.
\medbreak
By definition, $\mathbf{a}^*_0(n, \overline{\alpha}, m)$ satisfies the constraints of $(IPP_0)$ and has no valid predecessor among these constraints. 
To have the best cost (which implies no predecessor), the partition contains as many lines (i.e. color classes) of maximal size (equal to $\overline{\alpha}$) as possible.
$m$ corresponds to the maximum number of lines of size $\overline{\alpha}$, then we take~:
\begin{equation*}
m = \min\left(\left\lfloor\frac{n}{\alpha(G)}\right\rfloor,\ \#IS(\alpha(G)),\ \alpha(\tilde{G})\right)
\end{equation*}
where $\#IS(k)$ is the number of ISs of $G$ with size equals to $k$ a positive integer.
$\alpha(G)$, $\#IS(\alpha(G))$ and $\alpha(\tilde{G})$ can be calculated with the open source code 
MoMC\footnote{code available on: https://home.mis.u-picardie.fr/$\sim$cli/EnglishPage.html}~\cite{Li2017}. If it is too time-consuming, we take: $m = \left\lfloor\frac{n}{\overline{\alpha}}\right\rfloor$.

The remaining integer $n-m\times\overline{\alpha}$ uses then as many independent sets of size $\overline{\alpha}-1$ as possible.

\begin{Thm}\label{thm:cost_ws}
Let be the euclidean division of $n-m\overline{\alpha}$ by $(\overline{\alpha} -1)$~:
\begin{equation}
n-m\overline{\alpha}=q\times(\overline{\alpha}-1)+r
\end{equation}
with $q$ and $r<\overline{\alpha}-1$ two positive integers, therefore the optimal partition of $(IPP_0)$ is~:
\begin{equation}
\mathbf{a}^*_0=(\overbrace{\overline{\alpha},\ \overline{\alpha},\ ...,\ \overline{\alpha}}^m, \overbrace{\overline{\alpha} -1,...,\ \overline{\alpha} -1}^q,r)
\end{equation}
and the optimal objective function is equal to~:
\begin{eqnarray}
\Sigma M_0(n, \overline{\alpha}, m)&=&f_{\Sigma M}(\mathbf{a}^*_0)\nonumber\\&=& \frac{m(m+1)}{2}\overline{\alpha} + \frac{q(2m+q+1)}{2}(\overline{\alpha}-1) + (m +q+1)r
\end{eqnarray}
\end{Thm}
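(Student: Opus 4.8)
The plan is to sidestep the successor/predecessor lattice and solve $(IPP_0)$ directly, by passing to the \emph{conjugate} (column--height) partition and recognising the objective as a separable convex function. I will use the hypotheses implicit in the statement: $m\ge 0$, $m\overline{\alpha}\le n$ (so the Euclidean division has a nonnegative ``remainder mass''; this holds whenever $m\le\lfloor n/\overline{\alpha}\rfloor$, which covers all choices of $m$ in Section~\ref{sect:resolution}), and $\overline{\alpha}\ge 2$ (if $\overline{\alpha}=1$ the problem degenerates and the claim is immediate). First I would check that $\mathbf{a}^*_0=(\overline{\alpha}^{\,m},(\overline{\alpha}-1)^{\,q},r)$ is feasible for $(IPP_0)$: it is non-increasing because $\overline{\alpha}>\overline{\alpha}-1>r$ (since $r<\overline{\alpha}-1$), its parts sum to $m\overline{\alpha}+q(\overline{\alpha}-1)+r=n$ by the division identity, every part is $\le\overline{\alpha}$, and exactly $m$ parts equal $\overline{\alpha}$, so the cap constraint holds.

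The key step is the change of variables. For a feasible $\mathbf{a}=(a_1\ge a_2\ge\cdots)$ set $h_j:=|\{i:a_i\ge j\}|$ for $j=1,\dots,\overline{\alpha}$. Since $\mathbf{a}$ is non-increasing, $\{i:a_i\ge j\}=\{1,\dots,h_j\}$, so counting the cells of the Young diagram column by column gives $f_{\Sigma M}(\mathbf{a})=\sum_i i\,a_i=\sum_{j=1}^{\overline{\alpha}}\sum_{i=1}^{h_j} i=\sum_{j=1}^{\overline{\alpha}}\frac{h_j(h_j+1)}{2}$, while $\sum_{j=1}^{\overline{\alpha}} h_j=\sum_i a_i=n$. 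Conjugation of Young diagrams is a bijection between the feasible set of $(IPP_0)$ and the integer vectors $(h_1,\dots,h_{\overline{\alpha}})$ with $\sum_j h_j=n$ and $\min_j h_j\le m$ (the cap $|\{i:a_i=\overline{\alpha}\}|\le m$ is exactly $h_{\overline{\alpha}}=\min_j h_j\le m$). So $(IPP_0)$ is equivalent to minimising the symmetric, separably strictly convex $\Phi(\mathbf{h})=\sum_j \frac{h_j(h_j+1)}{2}$ over that set.

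Solving this needs two exchange arguments. First, if a feasible $\mathbf{h}$ has $\min_j h_j=\mu<m$, then since $\sum_j h_j=n\ge m\overline{\alpha}$ not all coordinates can be $\le m$, so $\max_j h_j\ge m+1\ge\mu+2$; replacing that maximum $H$ and the minimum $\mu$ by $H-1$ and $\mu+1$ preserves the sum and the cap (the new minimum is still $\le\mu+1\le m$) and strictly decreases $\Phi$ by convexity. Hence the optimum has $\min_j h_j=m$, i.e.\ all $h_j\ge m$ with at least one $h_j=m$. Subtracting $m$ from every coordinate leaves a nonnegative integer vector whose entries sum to $n-m\overline{\alpha}=q(\overline{\alpha}-1)+r$ and at least one of which is $0$; the second exchange argument (if two entries differ by $\ge 2$, bring them closer) shows this vector is as balanced as possible, namely $r$ entries equal to $q+1$, $\overline{\alpha}-1-r$ entries equal to $q$, and one entry $0$. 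Therefore the unique minimiser is, after sorting, $\mathbf{h}^*=(\underbrace{q+m+1,\dots,q+m+1}_{r},\underbrace{q+m,\dots,q+m}_{\overline{\alpha}-1-r},m)$.

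Finally I would conjugate back and sum. Reading off $a_i=|\{j:h^*_j\ge i\}|$ gives $a_i=\overline{\alpha}$ for $i\le m$ (even the shortest column, of height $m$, reaches row $i$), $a_i=\overline{\alpha}-1$ for $m<i\le m+q$ (the first $\overline{\alpha}-1$ columns have height $q+m\ge i$, the last one only $m$), $a_{m+q+1}=r$, and $a_i=0$ afterwards --- precisely $\mathbf{a}^*_0$. Then $f_{\Sigma M}(\mathbf{a}^*_0)=\overline{\alpha}\sum_{i=1}^m i+(\overline{\alpha}-1)\sum_{i=m+1}^{m+q} i+(m+q+1)r$, and with $\sum_{i=1}^m i=\frac{m(m+1)}{2}$ and $\sum_{i=m+1}^{m+q} i=\frac{(m+q)(m+q+1)-m(m+1)}{2}=\frac{q(2m+q+1)}{2}$ this is the announced $\Sigma M_0(n,\overline{\alpha},m)$. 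I expect the convex-minimisation step to be the main obstacle: writing the two exchanges cleanly and, above all, handling the degenerate cases ($m=0$, where column $\overline{\alpha}$ is empty and one minimises over $\overline{\alpha}-1$ columns; $q=0$; $r=0$; and $\overline{\alpha}=1$, where $q,r$ are ill-defined). I would also note that the informal reasoning in the surrounding text --- ``the optimum has no predecessor, hence packs as many maximal lines as possible'' --- really needs the above argument, since having no predecessor in the sense of Algorithm~\ref{algo:pred} does not by itself force optimality (for instance $(1,\dots,1)$ has no predecessor); conjugation plus convexity fills that gap.
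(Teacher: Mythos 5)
You take a genuinely different route from the paper. The paper offers no formal proof of this theorem: it leans on the predecessor machinery of Section~\ref{sect:comparison} --- the cost strictly decreases when passing to a predecessor (Theorem~\ref{thm:costsuccessor}), so an optimum has no predecessor --- plus the informal remark that the partition packing $m$ lines of size $\overline{\alpha}$, then as many lines of size $\overline{\alpha}-1$ as possible, is the predecessor-free one. Your argument instead conjugates the Young diagram: with column heights $h_j=|\{i:a_i\ge j\}|$ one gets $f_{\Sigma M}(\mathbf{a})=\sum_{j=1}^{\overline{\alpha}}h_j(h_j+1)/2$, $\sum_j h_j=n$, and the cap constraint becomes $h_{\overline{\alpha}}\le m$, so $(IPP_0)$ is a symmetric separable strictly convex minimisation solved by two exchange arguments; conjugating the minimiser back gives exactly $\mathbf{a}^*_0$ and the announced value of $\Sigma M_0$. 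This buys a genuine gain in rigour: as you correctly note, predecessor-freeness alone does not imply optimality (the column partition $(1,\dots,1)$ has no predecessor in the sense of Algorithm~\ref{algo:pred}), so the paper's chain of reasoning has a gap that your convexity argument closes; you also make explicit the hypotheses $m\overline{\alpha}\le n$ and $\overline{\alpha}\ge 2$ that the statement silently needs.

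Two points to tighten before this is airtight. First, conjugation is a bijection onto \emph{non-increasing} height vectors with $h_{\overline{\alpha}}\le m$; you minimise over all vectors with $\min_j h_j\le m$, which is fine because the objective and the constraint are symmetric, but this symmetrisation should be stated. Second, and more importantly, your second exchange as written (``if two entries differ by $\ge 2$, bring them closer'') is not an admissible move in general: applied to the unique entry equal to $m$ it raises the minimum above $m$ and breaks the cap constraint --- indeed, for $q\ge 2$ it would ``improve'' the true optimum $(m+q+1,\dots,m+q,\dots,m)$ into an infeasible vector, so the naive rule cannot characterise the optimum. The repair is the one you half-anticipate: after the first exchange forces $\min_j h_j=m$, fix one coordinate at $m$ and balance only the remaining $\overline{\alpha}-1$ coordinates, whose sum $n-m=(m+q)(\overline{\alpha}-1)+r$ yields $r$ entries equal to $m+q+1$ and $\overline{\alpha}-1-r$ entries equal to $m+q$. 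With that restriction, and the degenerate cases $m=0$, $q=0$, $r=0$, $\overline{\alpha}=1$ handled as you indicate, the proof is complete and correct.
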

\newpage
An young diagram of the optimal partition $\mathbf{a}^*_0$ is given is Figure~\ref{fig:optPartThm}.
\begin{figure}[H]
\centering
\includegraphics[height=0.3\textwidth]{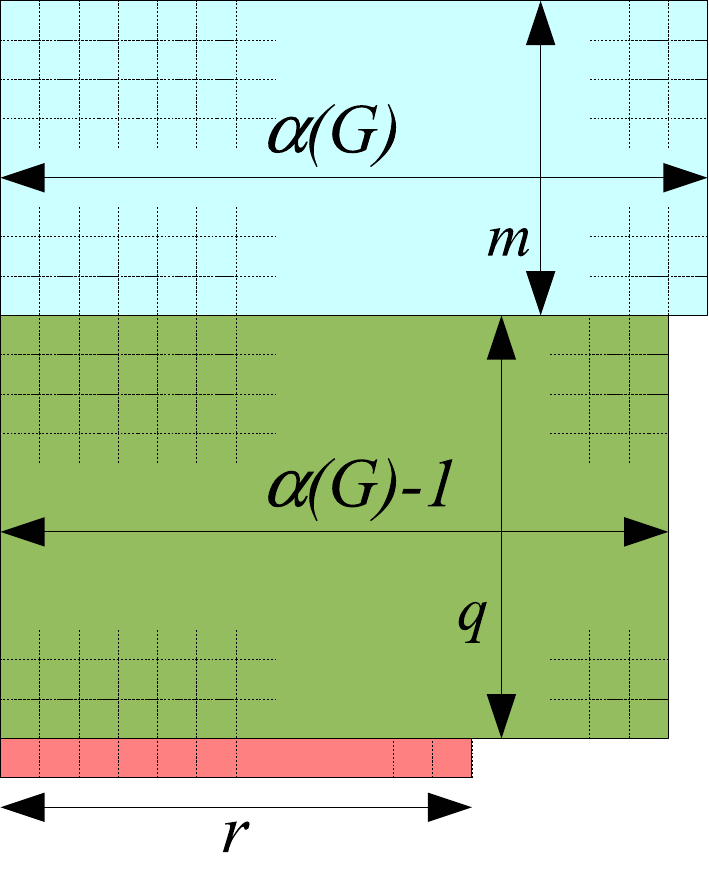}
\caption{Young diagram of the optimal partition $\mathbf{a}^*_0$ of $(IPP_0)$.}
\label{fig:optPartThm}
\end{figure}
Figure~\ref{fig:optimalPartition} gives an example of optimal partition with $n=16$, $\overline{\alpha}=\alpha(G)=4$ and $m=2$. Notice that in the worst case, when $m$ has its maximal value equals to $\left\lfloor\frac{n}{\alpha(G)}\right\rfloor$, then $q=0$ so 
$\mathbf{a}^*_0=(\overbrace{\lfloor n/\alpha(G)\rfloor...,\ \lfloor n/\alpha(G)\rfloor}^{\alpha(G)}, r)$.
\begin{figure}[H]
\centering
\includegraphics[height=0.5\textwidth]{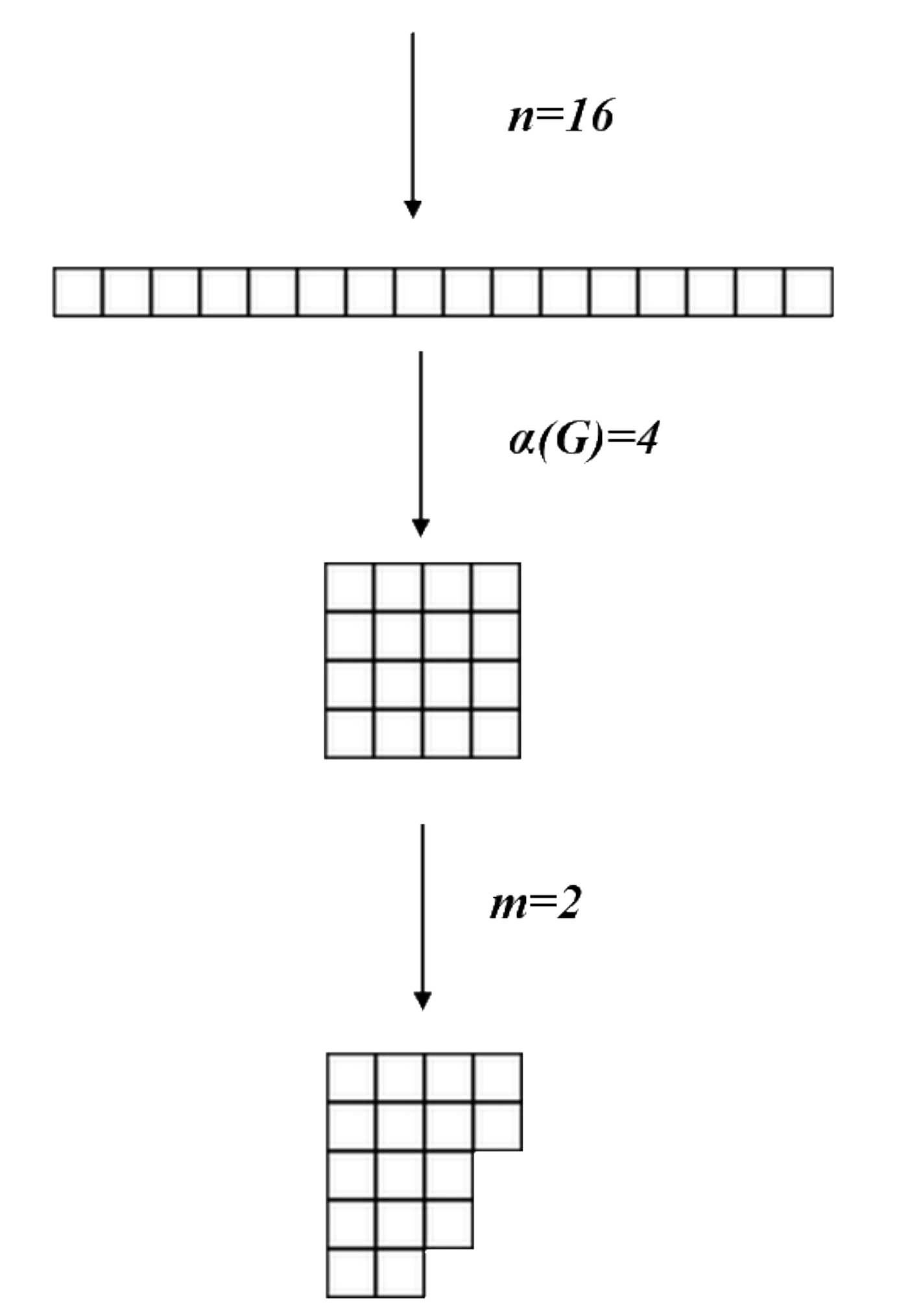}
\caption{Illustration of how to find the best partition given $n$, $\overline{\alpha}=\alpha(G)$ and $m$. The length of a line can not be over $\alpha(G)$. The number of lines with maximum lengthy can not be over $m$.}
\label{fig:optimalPartition}
\end{figure}

\subsubsection{With constraint $a_{\underline{s}}\geq1$}

If we know that we have to use at least $\underline{s}$ lines (i.e. colors), we have two possibilities :
\begin{itemize}
\item the previous solution already uses $\underline{s}$ lines, i.e. $m+q+1\geq\underline{s}$. Then the optimal solution of $(IPP_0)$ is also the optimal solution of $IPP_{\Sigma M}$, so~:
\begin{equation*}
\Sigma M = \Sigma M_0(n, \overline{\alpha}, m)=f_{\Sigma M}(\mathbf{a}^*_0)
\end{equation*}
\item the previous solution uses less than $\underline{s}$ lines. Then we start with one vertex in $\underline{s}$ different sets and we solve $(IPP_0)$ with the remaining vertices ($n-\underline{s}$) to find the solution. An illustration is given in Figure~\ref{fig:lb_decomposition}. Theorem~\ref{thm:cost_decomposition} gives the cost of the optimal partition.
\end{itemize}

\begin{figure}[H]
\centering
\includegraphics[width=0.5\textwidth]{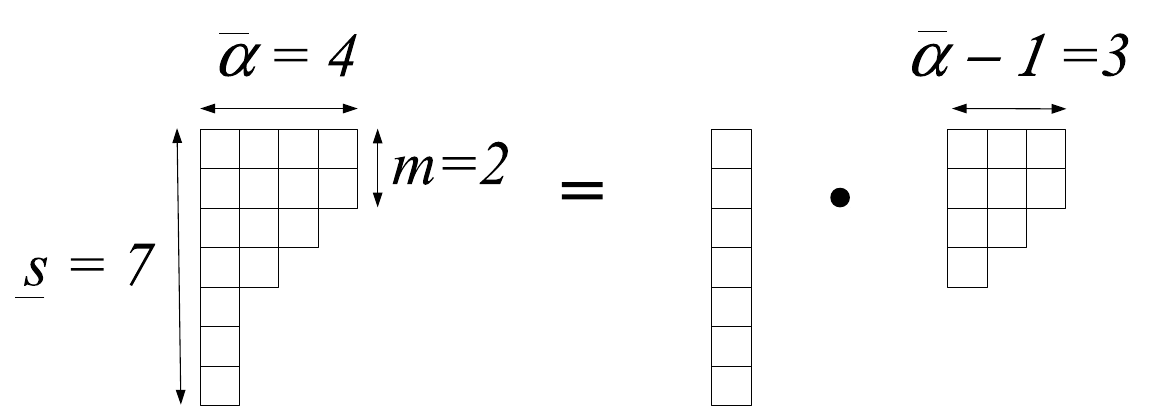}
\caption{Illustration of how to find the best partition with $n=16$, $\overline{\alpha}=4$, $m=2$ and $\underline{s}=7$. 
We add two partitions: the first one is a column of $\underline{s}$ squares; the second one is the optimal partition of one $(IPP_0)$ problem with $n'=n-\underline{s}$ squares, $\overline{\alpha}'=\overline{\alpha}-1$ and $m'=m$.}
\label{fig:lb_decomposition}
\end{figure}

\begin{Thm}\label{thm:cost_decomposition}
Let $\bullet$ be the line by line addition of two partitions. 
Let four integer $n, \overline{\alpha}, \underline{s}, m$ defining $IPP_{\Sigma M}$.
The optimal partition of $IPP_{\Sigma M}$ noted $\mathbf{a}^*$ is equal to:
\begin{equation}
\mathbf{a}^*(n, \overline{\alpha}, \underline{s}, m) = \mathbf{a}_{1\times\underline{s}} \bullet \mathbf{a}^*_0(n-\underline{s}, \overline{\alpha}-1, m)
\end{equation}
with $\mathbf{a}_{1\times\underline{s}}=(\overbrace{1,\ 1,...,\ 1}^{\underline{s}})$ the partition of the number $\underline{s}$ into ones and with $\mathbf{a}^*_0(n-\underline{s}, \overline{\alpha}-1, m)$ the optimal partition of $(IPP_0)$ with $n-\underline{s}, \overline{\alpha}-1$ and $m$ parameters, 
therefore: 
\begin{equation}
\Sigma M = \frac{\underline{s} (\underline{s}+1)}{2} + \Sigma M_0(n - \underline{s}, \overline{\alpha}-1, m)
\end{equation}
\end{Thm}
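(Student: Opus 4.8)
The plan is to rely on the characterisation just established — the optimal partition of $IPP_{\Sigma M}$ is the (unique) admissible partition having no valid predecessor — and simply exhibit it. So I would set $\mathbf{a}^{*}:=\mathbf{a}_{1\times\underline{s}}\bullet\mathbf{a}^{*}_{0}(n-\underline{s},\overline{\alpha}-1,m)$, check that $\mathbf{a}^{*}\in\Omega(IPP_{\Sigma M})$ and that $predecessor(\mathbf{a}^{*})=\varnothing$, and then read off the value of $\Sigma M$ from the linearity of $f_{\Sigma M}$. Throughout, the theorem is used in the situation where $\mathbf{a}^{*}_{0}(n,\overline{\alpha},m)$ uses fewer than $\underline{s}$ lines, i.e. $m+q+1<\underline{s}$; in particular $m<\underline{s}$, so in \emph{every} admissible partition the parts equal to $\overline{\alpha}$ occupy only positions $\leq m<\underline{s}$.

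For admissibility I would expand $\mathbf{a}^{*}_{0}(n-\underline{s},\overline{\alpha}-1,m)$ via Theorem~\ref{thm:cost_ws}: it is non-increasing, its parts are $\leq\overline{\alpha}-1$, at most $m$ of them equal $\overline{\alpha}-1$, and (using $m<\underline{s}$, together with the bound $m+q'+1\leq\underline{s}$ that the same inequality forces on the sub-problem) all its non-zero parts sit among the first $\underline{s}$ positions. Adding $\mathbf{a}_{1\times\underline{s}}$ term by term then keeps the vector non-increasing, raises these parts to values $\leq\overline{\alpha}$ with at most $m$ of them equal to $\overline{\alpha}$, leaves the sum equal to $\underline{s}+(n-\underline{s})=n$, and makes $a^{*}_{\underline{s}}\geq1$; hence all the constraints (9)--(14) of $IPP_{\Sigma M}$ hold.

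For ``no predecessor'' I would read the shape of $\mathbf{a}^{*}$ off the previous step — $m$ parts equal to $\overline{\alpha}$, then $q'$ parts equal to $\overline{\alpha}-1$, then one part $r'+1$, then parts equal to $1$ up to index $\underline{s}$, then $0$'s — and run through Algorithm~\ref{algo:pred} block by block: on the first $m$ lines and on the $(\overline{\alpha}-1)$-block the line just above is already at its cap ($\overline{\alpha}$, resp. $\overline{\alpha}-1$), so the cap condition fails; on the remainder line and on the trailing single-square lines there is a lone square at an index $\leq\underline{s}$, so the condition $a_{i}>1$ fails; on the lines of index $>\underline{s}$ the entry is $0$, so $a_{i}>0$ fails. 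Equivalently and more conceptually: since all active lines of $\mathbf{a}^{*}$ have index $\leq\underline{s}$, subtracting the fixed column $\mathbf{a}_{1\times\underline{s}}$ puts the predecessor moves of $\mathbf{a}^{*}$ (caps $\overline{\alpha},\overline{\alpha}-1$; floor $a_{i}\geq1$ on lines $\leq\underline{s}$) in bijection with those of $\mathbf{a}^{*}_{0}(n-\underline{s},\overline{\alpha}-1,m)$ (caps $\overline{\alpha}-1,\overline{\alpha}-2$; floor $a_{i}\geq0$), up to the single borderline index $i=\underline{s}$, which is handled by the cap condition — and the latter partition has no predecessor, being the optimum of an $(IPP_0)$. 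Hence $predecessor(\mathbf{a}^{*})=\varnothing$, so $\mathbf{a}^{*}$ is the optimal partition of $IPP_{\Sigma M}$, and since $f_{\Sigma M}$ is linear and $\bullet$ is the term-by-term sum, $\Sigma M=f_{\Sigma M}(\mathbf{a}^{*})=f_{\Sigma M}(\mathbf{a}_{1\times\underline{s}})+\Sigma M_0(n-\underline{s},\overline{\alpha}-1,m)=\sum_{i=1}^{\underline{s}}i+\Sigma M_0(n-\underline{s},\overline{\alpha}-1,m)=\frac{\underline{s}(\underline{s}+1)}{2}+\Sigma M_0(n-\underline{s},\overline{\alpha}-1,m)$.

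The delicate step is the ``no predecessor'' verification: one must handle the boundary line $i=\underline{s}$ and the degenerate configurations ($q'=0$, $r'=0$, or $n-\underline{s}$ too small for the sub-problem to realise $m$ full parts of size $\overline{\alpha}-1$) with care, and one must make fully explicit the implication ``fewer than $\underline{s}$ lines'' $\Rightarrow$ $m<\underline{s}$ and $m+q'+1\leq\underline{s}$, since this is exactly what keeps the column-subtraction correspondence between the two $predecessor$ operators clean.
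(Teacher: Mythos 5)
Your route is the one the paper itself intends (it gives no explicit proof of this theorem): exhibit the candidate partition, check it is admissible, check it has no predecessor, and invoke the characterisation of the optimum as the predecessor-free partition, then read off the cost by linearity. Your supporting computations are sound: admissibility of the line-by-line sum, the implication $m+q+1<\underline{s}\Rightarrow m<\underline{s}$ and $m+q'+1\le\underline{s}$ (which indeed holds, by a short computation with the two Euclidean divisions), and the block-by-block verification that Algorithm~3 returns $\varnothing$ on the exhibited partition.

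The genuine gap is the final inference. What you need is the \emph{converse} of Theorem~4 together with uniqueness: ``an admissible partition without predecessor is \emph{the} optimum.'' The paper's proof of Theorem~4 only establishes the other direction (an optimum has no predecessor), Lemma~1's connectivity does not supply the converse, and with Algorithm~3 exactly as written the uniqueness you invoke is false. Concretely, take $n=6$, $\overline{\alpha}=3$, $m=1$, $\underline{s}=4$ (so $\mathbf{a}^*_0=(3,2,1)$ has $3<\underline{s}$ lines and the theorem's case applies, giving the optimum $(3,1,1,1)$ of cost $12$): the admissible partition $(2,2,1,1)$ has cost $13$ yet $predecessor((2,2,1,1))=\varnothing$, because for $i=2>m$ the algorithm requires $a_{1}<\overline{\alpha}-1$, whereas the legal move to $(3,1,1,1)$ (which is semantically a predecessor, since $(2,2,1,1)\in successor((3,1,1,1))$) only needs the \emph{destination} index $j=1\le m$; the test compares the source index $i$ with $m$ instead of $j$. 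So ``predecessor-free per Algorithm~3'' does not single out the optimum, and your parenthetical ``(unique)'' is exactly the unproved load-bearing claim. To close the argument you must either repair the predecessor operator (test $j\le m$) and prove that every non-optimal admissible partition admits an admissible predecessor, or bypass the characterisation altogether with a direct exchange/lower-bound argument comparing $f_{\Sigma M}$ on an arbitrary admissible partition with $\frac{\underline{s}(\underline{s}+1)}{2}+\Sigma M_0(n-\underline{s},\overline{\alpha}-1,m)$ (note that simply subtracting the column $\mathbf{a}_{1\times\underline{s}}$ from an arbitrary admissible partition does not in general yield an admissible solution of the sub-problem, so this too requires a rearrangement step).
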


\section{Lower bound for graph coloring}\label{sect:LB_GVCP}
The relaxation used for MSCP can also be applied for GVCP. 
GVCP is a special case of MSCP for which all colors have the same cost, then let be a possible formulation of GVCP~:
\begin{equation*}
(GVCP)\left\{
\begin{minipage}[c]{0.8\linewidth}\vspace{-5mm}
\begin{eqnarray}
\displaystyle \text{Min.}   &   \displaystyle f_\chi(\mathbf{V})= |\{V_l\ |\ |V_l|>0,\ \forall l=1..n\}| \label{eq:obj2}\\[-0.5mm]
\text{s.c.} &Eq.~(\ref{eq:partition1}-\ref{eq:consColoring})\nonumber\\[-0.5mm]
            & V_l \subset V&\forall l=1..n\nonumber%
\end{eqnarray}
\end{minipage}
\right. \nonumber
\end{equation*}
This formulation differs from MSCP just by the objective function, that counts the number of non-empty partitions.
Therefore, the relaxation into an integer partition problem becomes~:
\begin{equation*}
(IPP_{\chi})\left\{
\begin{minipage}[c]{0.7\linewidth}\vspace{-5mm}
\begin{eqnarray}
\displaystyle \text{Min.}   &   \displaystyle f_{\chi}(\mathbf{a})= |\{a_i | a_i>0,\ \forall i=1..n\}| \\[-0.5mm]
\text{s.c.} &Eq.~(\ref{eq:partition1bis}-\ref{eq:constraintm}) \nonumber\\[-0.5mm]
            & \mathbf{a}=(a_i)_{1\leq i\leq n} \in \mathbb{N}^n \nonumber
\end{eqnarray}
\end{minipage}
\right. \nonumber
\end{equation*}
Again $IPP_{\chi}$ differs from $IPP_{\Sigma M}$ just by the objective function, that corresponds to the number of lines in Young diagram and because Eq.~(\ref{eq:LBchi}) is useless.
In this case, the optimal objective function value is equal to~: 
\begin{equation}
LB_\chi=f_{\chi}(\mathbf{a^*})=m+q+[r\neq0] 
\end{equation}
with $m$, $q$ and $r$ given by the Thm.\ref{thm:cost_ws} and $\mathbf{a^*}$ the optimal solution of~$IPP_{\chi}$ 
and $[\ ]$ is Iverson  bracket~: $[r\neq0] = 1$ if $r\neq0$ is true and equals $0$ otherwise. 
$LB_\chi$ is a lower bound of $\chi(G)$ and its value depends only of three integers: $n$, $\overline{\alpha}$ and $m$.

\section{Results}\label{sect:result}

\subsection{Procedure}
To compute our lower bound $\Sigma M$, we use $\overline{G}$  the complementary graph of $G$. We run MoMC solver
to find $\alpha(G)$, the size of the maximal clique of $\overline{G}$. We also save the list of all the maximal IS of $G$ to compute $\#is$ and to build $\tilde{G}$.
Algorithm~\ref{algo:global_procedure} recall the global procedure to compute $\Sigma M$.

\begin{algorithm}[H]
 \DontPrintSemicolon
\Input{$G$, $n$}
  $\overline{\alpha}\leftarrow
\left\{
\begin{array}{ll}
\alpha(G)& \text{if it is not too time-consuming with MoMC solver,}\\
\overline{\alpha}(G)&\text{otherwise.}
\end{array}
\right.$\;
  $\#is\leftarrow
\left\{
\begin{array}{ll}
\#is(\overline{\alpha})& \text{if it is not too time-consuming with MoMC solver,}\\
\left\lfloor\frac{n}{\overline{\alpha}}\right\rfloor&\text{otherwise.}
\end{array}
\right.$\;
  Build $\tilde{G}$, the graph of the maximum IS.\;
  $\tilde{\alpha}\leftarrow
\left\{
\begin{array}{ll}
\alpha(\tilde{G})& \text{if it is not too time-consuming with MoMC solver,}\\
\#is&\text{otherwise.}
\end{array}
\right.$\;
  $m \leftarrow \min\left(\left\lfloor\frac{n}{\overline{\alpha}}\right\rfloor,\ \#is,\ \tilde{\alpha}\right)$\;
  $q\leftarrow \left\lfloor\frac{n-m\overline{\alpha}}{\overline{\alpha}-1}\right\rfloor$\;
  $r\leftarrow n-m\overline{\alpha} - q\times(\overline{\alpha}-1)$\;
  $\underline{s}\leftarrow
\left\{
\begin{array}{ll}
\underline{\chi}(G)&\text{the best known lower bound of $\chi(G)$ if it is known,}\\
\left\lceil\frac{n}{\overline{\alpha}}\right\rceil&\text{otherwise.}
\end{array}
\right.$\;
  $\Sigma M\leftarrow
\left\{
\begin{array}{ll}
\Sigma M_0(n,\overline{\alpha},m)& \text{if } m+q+1\geq\underline{s}\\
\frac{\underline{s}(\underline{s}+1)}{2}+\Sigma M_0(n-\underline{s},\overline{\alpha}-1,m)&\text{otherwise}
\end{array}
\right.$\;
\caption{Procedure to compute $\Sigma M$.}\label{algo:global_procedure}
\end{algorithm}

\subsection{Empirical Results and Analysis}
We tested our procedure on some graph instances of DIMACS and COLOR  benchmarks, which are frequently used for performance evaluation of $MSCP$ algorithms~\cite{Jin2016}.
For several instances, the list of maximum independent set takes too much computational time to be found. There are two possible reasons for this : 
\begin{itemize}
 \item $\overline{G}$ is too dense and MoMC needs too much time to find a maximum clique.
 \item the number of maximum cliques is too high and we can't compute their graph.
\end{itemize}

In our tests, we reuse open source code (MoMC and Cliquer) written in C and we write a script coded in Python~\footnote{code available on: github.com/gondran/LowBoundSumColoring}.
The results were obtained with an Intel Xeon E5 2.50GHz processor - 8 cores and 16GB of RAM.

In the Table~\ref{tab:results}, we show a set of results, focused on graphs where our lower bound could be computed. We compare our lower bound, $\Sigma M$, to $\Sigma_{old}$ (the best lower bound known according to \cite{Jin2017}, \cite{Lecat2017_2}, \cite{Wu2018}) and $LBM\Sigma$ (lower bound described  in \cite{Lecat2017_2}, that we calculate or update for several graphs).
The three first columns indicate the graph instance name, its number of vertices and its density.
Columns 4 and 5 give the maximum independent set of the graph, $\alpha(G)$ and the time (in second) required to compute it with the MoMC solver.
The two next columns show the number of maximum independent set of the graph, $\string#is$ and the time (in second) required to compute it with the MoMC solver or the Cliquer solver.
Columns 8 and 9 indicate, $m$, the maximum number of \emph{compatibles} independent sets of the graph and the time (in second) required to compute it with the MoMC solver. If $\string#is$ is too high ($>5 000$), then the computation of $m$ is not done because it would take too long.
Column 10 gives the chromatic number of the graph $\chi(G)$, if it is known and its interval of belonging otherwise.
Note that this information comes from other algorithms after a computation-time that may be very long. 
Next column provides the result of our lower bound, $LB_\chi$, computed as in section~\ref{sect:LB_GVCP}.
Last five columns display, $\Sigma(G)$, the chromatic sum when it is known; ${\Sigma}_{old}$, the best lower bound of $\Sigma(G)$ known in the literature; 
$LBM\Sigma$, the lower bound presented in~\cite{Lecat2017_2};
$\Sigma M_0$, our lower bound computed without $\underline{s}=\underline{\chi}(G)$; 
$\Sigma M$, our lower bound computed with $\underline{s}=\underline{\chi}(G)$.


\newpage
\begin{table*}[h!]
\hal{\begin{adjustwidth}{-0.3in}{0in}}{\begin{adjustwidth}{-1in}{0in}} 
\centering
\setlength{\tabcolsep}{3pt}
\caption{\label{tab:results}Results of \emph{weak optimality} tests on some graphs of DIMACS and COLOR benchmarks.}
{\footnotesize
    \begin{tabular}{cccccccccccccccc}
\toprule
Instances & $n$ & $d$& $\alpha(G)$&time&$\string#is$&time&$m$&time&$\chi(G)$&$LB_\chi$& $\Sigma(G)$ & ${\Sigma}_{old}$& $LBM\Sigma$&$\Sigma M_0$&$\Sigma M$\\
\midrule
\bottomrule
myciel3 & 11 & 0.36 & 5 & $<0.1$  & 2 & $<0.1$ & 1 & 0 & 4 & 3 & ? &\bf 20 & \bf20 & 19 & \bf20\\ 
myciel4 & 23 & 0.28 & 11 & $<0.1$ & 2 & $<0.1$ & 1 & 0 & 5 & 3 & ? & \bf41 & \bf41 & 37 & \bf41\\ 
myciel5 & 47 & 0.22 & 23 & $<0.1$ & 1 & $<0.1$ & 1 & 0 & 6 & 3 & ? & 80 & \bf81 & 73 & \bf81\\ 
myciel6 & 95 & 0.17 & 47 & $<0.1$ & 1 & $<0.1$ & 1 & 0 & 7 & 3 & ? & \bf158 & \bf158 & 145 & \bf158\\ 
myciel7 & 191 & 0.13 & 95 & $<0.1$ & 1 & 0.1 & 1 & 0 & 8 & 3 & ? & \bf308 & \bf308 & 289 & \bf308\\ 

queen5\_5 & 25 & 0.53 & 5 & $<0.1$ & 10 & $<0.1$ & 5 & $<0.1$ & 5 & \bf5 & \bf75 & \bf75 & \bf75 & \bf75 & \bf75\\ 
\rowcolor{lightblue}queen6\_6 & 36 & 0.46 & 6 & $<0.1$ & 4 & $<0.1$ & 4 & $<0.1$ & 7 & \bf7 & ? & 126 & 127 & \bf129 & \bf129\\ 
queen7\_7 & 49 & 0.40 & 7 & $<0.1$ & 40 & $<0.1$ & 7 & $<0.1$ & 7 & \bf7 & \bf196 & \bf196 & \bf196 & \bf196 & \bf196\\ 
\rowcolor{lightblue}queen8\_8 & 64 & 0.36 & 8 & $<0.1$ & 92 & $<0.1$ & 6 & $<0.1$ & 9 & \bf9 & \bf291* & 288 & 289 & \bf291 & \bf291\\ 
queen8\_12 & 96 & 0.30 & 8 & $<0.1$ & 195 271 & 0.3 & $\string#is$& $\times$& 12 & \bf12 & ? & \bf624 & \bf624 & \bf624 & \bf624\\ 
\rowcolor{lightblue}queen9\_9 & 81 & 0.33 & 9 & $<0.1$ & 352 & $<0.1$ & 7 & $<0.1$ & 10 & \bf10 & ? & 405 & 406 & \bf408 & \bf408\\ 
\rowcolor{lightblue}queen10\_10 & 100 & 0.30 & 10 & $<0.1$ & 724 & 0.2 & 8 & 0.8 & 11 & \bf11 & \bf553* & 550 & 551 & \bf553 & \bf553\\ 
queen11\_11 & 121 & 0.27 & 11 & $<0.1$ & 2 680 & 0.9 & 11 & 207 & 11 & \bf11 & \bf726 & \bf726 & \bf726 & \bf726 & \bf726\\ 
queen12.12&144&0.25&12&$<0.1$&14 200&6**&$\string#is$&$\times$&12&\bf12&\bf 936&\bf 936&\bf 936&\bf 936&\bf 936\\
queen13.13&169&0.23&13&$<0.1$&73 712&0.5**&$\string#is$&$\times$&13&\bf13&\bf 1 183&\bf 1 183&\bf 1 183&\bf 1 183&\bf 1 183\\
queen14.14&196&0.22&14&$<0.1$&365 596&3**&$\string#is$&$\times$&14&\bf14&\bf 1 470&\bf 1 470&\bf 1 470&\bf 1 470&\bf 1 470\\
queen15.15&225&0.21&15&$<0.1$&2 279 184&17**&$\string#is$&$\times$&15&\bf15&\bf 1 800&\bf 1 800&\bf 1 800&\bf 1 800&\bf 1 800\\
queen16.16&256&0.19&16&$<0.1$&14 772 512&113**&$\string#is$&$\times$&16&\bf16&?&\bf 2 176&\bf 2 176&\bf 2 176&\bf 2 176\\

2-Insertions\_3 & 37 & 0.11 & 18 & $<0.1$  & 1 & $<0.1$ & 1 & 0 & 4 & 3 & ? & 55 &\bf 59 & 58 &\bf 59\\ 
\rowcolor{lightblue}3-Insertions\_3 & 56 & 0.07 & 27 & $<0.1$ & 11 & $<0.1$ & 1 & 0 & 4 & 3 & ? & 84 & 88 & 88 & \bf89\\ 

\rowcolor{lightblue}DSJC125.1 & 125 & 0.09 & 34 & $<0.1$ & 747 & $<0.1$ & 1 & 0 & 5 & 4 & ? & 297 & 297 & 299 & \bf300\\ 
\rowcolor{lightblue}DSJC125.5 & 125 & 0.50 & 10 & $<0.1$ & 2 & $<0.1$ & 1 & 0 & 17 & 14 & ? & 851 & 855 & 918 & \bf924\\ 
\rowcolor{lightblue}DSJC125.9 & 125 & 0.90 & 4 & $<0.1$ & 9 & $<0.1$ & 5 & $<0.1$ & 44 & 40 & ? & 2 108 & 2 124 & 2 475 & \bf2 487\\ 
\rowcolor{lightblue}DSJC250.5 & 250 & 0.50 & 12 & $<0.1$ & 2 & $<0.1$ & 2 & $<0.1$ & \textlbrackdbl$26,\ 28$\textrbrackdbl & 23 & ? & 2 745 & 2 745 & 2 924 & \bf2 930\\ 
\rowcolor{lightblue}DSJC250.9 & 250 & 0.90 & 5 & $<0.1$ & 3 & $<0.1$ & 2 & $<0.1$ & 72 & 62 & ? & 6 651 & 6 678 & 7 815 & \bf7 882\\
\rowcolor{lightblue}DSJC500.5 & 500 & 0.50 & 13 & 2 & 51 & 3 & 9 & $<0.1$ & \textlbrackdbl$43,\ 47$\textrbrackdbl & 41 & ? & 9 867 & 9 877 & 10 336 & \bf10 339\\ 
\rowcolor{lightblue}DSJC500.9 & 500 & 0.90 & 5 & $<0.1$ & 23 & $<0.1$ & 15 & $<0.1$ & \textlbrackdbl$123,\ 126$\textrbrackdbl & 122 & ? & 25 581 & 25 581 & 29 766 & \bf29 768\\ 
\rowcolor{lightblue}DSJC1000.5&1000&0.50&15&159&12&290&6&$<0.1$&\textlbrackdbl$73,\ 82$\textrbrackdbl&71&?&33 835& 33 856&35 805&{\bf 35 808}\\
\rowcolor{lightblue}DSJC1000.9 & 1000 & 0.90 & 6 & 0.1 & 3 & 0.1 & 3 & $<0.1$ & \textlbrackdbl$216,\ 222$\textrbrackdbl & 200 & ? & 85 235 & 85 294 & 99 906 & \bf100 078\\ 

DSJR500.1c & 500 & 0.97 & 13 & $<0.1$ & 4 & $<0.1$ & 2 & $<0.1$ & 85 & 42 & ? & \bf15 398 & 11 040 & 10 587 & 11 619\\ 
DSJR500.5 & 500 & 0.47 & 7 & 0.3 & 18 & 0.8 & 2 & $<0.1$ & 122 & 83 & ? & \bf23 609 & 19 599 & 20 919 & 21 832\\ 

flat300\_20\_0&300&0.48&15&$<0.1$&20&$<0.1$&20&$<0.1$&20&\bf 20&\bf 3 150&\bf 3 150&\bf 3 150&\bf 3 150&\bf 3 150\\
\rowcolor{lightblue}flat300\_26\_0&300&0.48&12&$<0.1$&31&1&14&$<0.1$&26&\bf 26&\bf 3 966*&3 901&3 901&\bf3 966&\bf3 966\\
\rowcolor{lightblue}flat300\_28\_0&300&0.48&12&$<0.1$&45&1&6&$<0.1$&28&27&?&3 906&3 906&4 098&\bf 4 099\\
flat1000\_50\_0&1000&0.49&20&36&50&78&50&$<0.1$&50&\bf50&\bf 25 500&\bf 25 500&\bf 25 500&\bf 25 500&\bf 25 500\\
\rowcolor{lightblue}flat1000\_60\_0&1000&0.49&17&89&42&199&40&$<0.1$&60&\bf60&\bf 30 100*&29 914&29 914&{\bf 30 100}&{\bf 30 100}\\
\rowcolor{lightblue}flat1000\_76\_0&1000&0.49&15&184&21&394&8&$<0.1$&76&71&?&33 880&33 880&35678&{\bf 35 693}\\
	\end{tabular}}
\end{adjustwidth}
* new optimal solution that we proved

** computation was done with the Cliquer\footnote{code available on: https://users.aalto.fi/$\sim$pat/cliquer.html}~\cite{Ostergaard2001} 
 solver because for listing all maximum cliques Cliquer is faster than MoMC. 
\end{table*}

On the 37 graphs of this test set, 
we improve the result for 18 graphs (light blue in Table~\ref{tab:results}). 
For 17 graphs we find the same lower bound as in the literature.
For 2 graphs, our lower bound is worse than the best lower bound of the literature.
The reason is that for these graphs, the approach used by Moukrim et al.~\cite{Moukrim2010,Moukrim2013} is totally different of ours, based on the decomposition of the graph into partition of cliques. 
For 9 graphs, the optimal lower bound was already found with the much simpler $LBM\Sigma$ lower bound.
We have proven the exact value of $\Sigma(G)$ for 4 graphs (with * in Table~\ref{tab:results}), and have computed an example of optimal coloring.
Our approach is similar to that of Lecat et al.~\cite{Lecat2017_2} but we outperformed their results by introducing an additional constraint with the integer $m$.

Our lower bound of $\chi(G)$, $LB_\chi$, is equal to the chromatic number for all \emph{queen} graph instances (13 graphs) and for 4 \emph{flat} graph instances.
Note that for five of those instances (\emph{queen5\_5}, \emph{queen7\_7}, \emph{queen11\_11}, \emph{flat300\_20\_0} and \emph{flat1000\_50\_0}), we also found the optimal coloring.
Indeed, in those cases, the decomposition of the number of vertices is~: $n=\alpha(G)\times m+(\alpha(G)-1)\times q+r$ with $q=r=0$, i.e. $\chi(G)=m$.
Therefore we know the $m$ compatible independent sets that composed the optimal solution.

Moreover, our $LB_\chi$ is computed in $78~s$ and $199~s$ for respectively \emph{flat1000\_50\_0} and \emph{flat1000\_60\_0} graph instances while it takes respectively $3~331s$ and $29~996s$ with the most powerful method~\cite{held2012} to find a lower bound of $\chi(G)$.

\section{Conclusion}\label{conclusion}

We presented a new way to find lower bounds for the $MSCP$ and the $GVCP$. 
In order to do this, we explained how to relax $MSCP$ into an integer partition problem that can be exactly solved. 
We can select the constraints we want to keep in this integer partition problem, and we proposed a set of constraints used to define the $\Sigma M$ lower bound.
We carried out experiments and improved the best known lower bound for 18 graphs of standard benchmark DIMACS. 
We also proved the optimality of 4 of them.

It is also possible to add even more constraints in the integer partition problem. Further researches could use this approach to keep improving the lower bound.

\nolinenumbers

\bibliographystyle{elsarticle-num}
\bibliography{bib_sumcol,gcp_bib}

\end{document}